\documentclass[preprint]{autart}

\usepackage{amssymb}
\usepackage{amsmath,amsfonts}
\usepackage{mathtools}
\usepackage{dsfont}
\usepackage{pifont}
\usepackage{array}
\usepackage{booktabs}
\usepackage{textcomp}
\usepackage{stfloats}
\usepackage{url}
\usepackage{verbatim}
\usepackage{graphicx}

\usepackage[round]{natbib}

\usepackage[hidelinks]{hyperref}

\usepackage{subcaption}

\usepackage{algorithmic}
\usepackage{algorithm}

\usepackage{xparse}
\usepackage{etoolbox} 

\usepackage{times}
\newtheorem{problem}{Problem}
\newtheorem{assumption}{Assumption}

\usepackage{setspace}
\usepackage{tikz}
\usepackage{flowchart}
\usetikzlibrary{arrows}
\usepackage{textcomp}
\usetikzlibrary{shapes,arrows,positioning,calc}
\tikzstyle{startstop} = [rectangle, rounded corners, minimum width=2cm, minimum height=1cm,text centered, draw=black]
\tikzstyle{process} = [rectangle, minimum width=2cm, minimum height=1cm, text centered, draw=black]
\tikzstyle{decision} = [diamond, minimum width=0.5cm, minimum height=0.5cm, text centered, draw=black]
\tikzstyle{arrow} = [thick,->,>=stealth]
\definecolor{lightgray}{gray}{0.85}
\definecolor{lightergray}{gray}{0.9}

\definecolor{darkgreen}{RGB}{34,139,34}   
\definecolor{darkred}{RGB}{178,34,34}     

\usepackage[framemethod=TikZ]{mdframed}
\mdfdefinestyle{callout}{%
	linecolor=black,
	linewidth=1pt,%
	roundcorner=0pt,
	innertopmargin=4pt,
	innerbottommargin=4pt,
	innerrightmargin=5pt,
	innerleftmargin=5pt,
	leftmargin = 0pt,
	rightmargin = 0pt,
	backgroundcolor=lightergray
}

\usepackage{dsfont}

\usepackage{dirtytalk}

\usepackage{endnotes}

\definecolor{cadetblue}{rgb}{0.33, 0.41, 0.58}

\DeclareMathOperator{\argmin}{\mathrm{argmin}}

\DeclareMathOperator{\EV}{\mathds{E}}

\DeclareMathOperator{\tr}{\mathrm{tr}}
\DeclareMathOperator{\col}{\mathrm{col}}
\DeclareMathOperator{\diag}{\mathrm{diag}}

\DeclareMathAlphabet{\doublestruck}{U}{BOONDOX-ds}{m}{n}
\newcommand{\ones}[0]{\mathds{1}}
\newcommand{\zeros}[0]{\doublestruck{0}}

\newcommand{\bepsilon}[0]{\boldsymbol{\epsilon}}

\newcommand{\bomega}[0]{\boldsymbol{\omega}}

\newcommand{\N}[0]{\mathbb{N}}
\newcommand{\Nz}[0]{\mathbb{N}_0}

\newcommand{\R}[0]{\mathbb{R}}

\newcommand{\Rnn}[0]{\mathbb{R}_{\geq0}}

\newcommand{\Ecal}[0]{\mathcal{E}}

\newcommand{\Gcal}[0]{\mathcal{G}}

\newcommand{\Ical}[0]{\mathcal{I}}

\newcommand{\Ocal}[0]{\mathcal{O}}
\newcommand{\Pcal}[0]{\mathcal{P}}

\newcommand{\Scal}[0]{\mathcal{S}}
\newcommand{\Tcal}[0]{\mathcal{T}}

\newcommand{\Vcal}[0]{\mathcal{V}}

\usepackage{xstring}

\usepackage{setspace}

\usepackage{tikz}
\usetikzlibrary{shapes,arrows}
\usetikzlibrary{arrows,calc,positioning}

\tikzset{
	block/.style = {draw, rectangle,
		minimum height=1.2cm,
		minimum width=1.2cm},
	input/.style = {coordinate,node distance=1cm},
	output/.style = {coordinate,node distance=2cm},
	arrow/.style={draw, -latex,node distance=2cm},
	pinstyle/.style = {pin edge={latex-, black,node distance=1cm}},
	sum/.style = {draw, circle, node distance=1cm},
}

\makeatletter
\patchcmd{\AU@maketitle}{\vskip 12pt}{\vskip 4pt}{}{}
\makeatother

\makeatletter
\renewcommand{\@authorsize}{\normalsize}
\makeatother

\DeclareMathSizes{10}{9}{7}{7}

\newtheorem{lemma}{Lemma}
\newtheorem{definition}{Definition}
\newtheorem{remark}{Remark}

\makeatletter
\long\def\@makecaption#1#2{%
	\vskip\abovecaptionskip
	\sbox\@tempboxa{#1: #2}%
	\ifdim \wd\@tempboxa >\hsize
	#1: #2\par
	\else
	\global \@minipagefalse
	\hbox to\hsize{\hfil\box\@tempboxa\hfil}%
	\fi
	\vskip\belowcaptionskip}
\makeatother

\begin{document}

\begin{frontmatter}


\title{Consistent Distributed Cooperative Localization for\\Ultra Large-Scale Multi-agent Systems}\vspace{-0.5cm}



\author[tue,ist]{Leonardo Pedroso}\ead{l.pedroso@tue.nl},    
\author[tue]{W.P.M.H. (Maurice) Heemels}\ead{m.heemels@tue.nl},               
\author[ist]{Pedro Batista}\ead{pbatista@isr.tecnico.ulisboa.pt}  

\address[tue]{Control Systems Technology section, Department of Mechanical Engineering, Eindhoven University of Technology, The Netherlands\vspace{-0.1cm}}  
\address[ist]{Institute for Systems and Robotics, Instituto Superior T\'ecnico, Universidade de Lisboa, Portugal}             

	\makeatletter
\def\AU@fld@textfont{\small}
\makeatother

\begin{keyword}                           
	Cooperative localization, Distributed estimation, Covariance intersection, Multi-agent systems, Ultra large-scale systems, Information fusion
\end{keyword}          

\begin{abstract}   
Cooperative localization (CL) is fundamental in emerging multi-agent systems, where agents fuse local sensing data with exchanged information to estimate their own states. At a large scale, however, tracking cross-correlations becomes infeasible, preventing the use of optimal filters.
Ignoring or underestimating these correlations leads to overconfident, and thus inconsistent,  estimates. Existing CL algorithms achieve good performance and consistency typically at the expense of communication, computation, or memory that scales with the network size. This is incompatible with ultra large-scale systems (ULSS)—for example, satellite mega-constellations—where per-agent resources are limited and must remain independent of the number of agents. 
This reveals a critical gap: no existing CL method is simultaneously \emph{well-performing}, \emph{consistent}, and \emph{ULSS-scalable}.
This paper introduces a new CL framework that addresses this gap using the recently proposed overlapping covariance intersection methodology, which enables agents to exploit limited structural information about cross-correlations without compromising consistency.
The resulting CL algorithm leads to optimal conservative covariance propagation using only locally available information. The method is fully distributed, scalable to an ultra large scale, and provably recursively consistent. 
Simulations demonstrate substantial performance improvement over state-of-the-art consistent CL approaches while preserving scalability.
\vspace{-0.3cm}
\end{abstract}

\end{frontmatter}

\vspace{-0.8cm}

\section{Introduction}\label{sec:introduction}

\subsection{Motivation} \vspace{-0.2cm}
Cooperative localization (CL) is a cornerstone problem in the coordination of large-scale multi-agent systems. It concerns the task where each agent estimates its own state by combining (i)~local measurements—such as odometry or relative measurements with respect to neighbors—and (ii)~information exchanged through communication. This paradigm enables accurate and robust localization in scenarios where global positioning infrastructure is unavailable, unreliable, or deliberately absent. Consequently, CL has become critical for applications ranging from ground robot teams \citep{FoxBurgardEtAl2000,RoumeliotisBekey2002}, to formations of unmanned aerial vehicles \citep{BeregDiaz-BanezEtAl2015,ThienKimYoonsoo2018}, cooperative underwater vehicles \citep{ViegasBatistaEtAl2012,YuanLichtEtAl2018}, and satellite formations \citep{RussellCarpenter2002,IvanovMonakhovaEtAl2019}. These approaches embrace the distributed nature of multi-agent systems, offering resilience, scalability, and flexibility \citep{Sahin2005}. Despite substantial progress, distributed CL remains challenging. A central difficulty lies in properly accounting for correlations between agents’ estimates. On the one hand, if correlations are ignored or underestimated, the estimator becomes overconfident and it is said to be \emph{inconsistent}, i.e., the estimated covariance of the fused estimate is smaller than the true covariance \citep{HowardMataricEtAl2003,PanzieriPascucciSetola2006}. Estimator inconsistency can have dire consequences, e.g., in debris avoidance maneuvers in satellite formations. On the other hand, if correlations are treated too conservatively, estimation is \emph{consistent} but performance is typically significantly degraded. This fundamental tension between consistency and performance is at the heart of the CL problem.

Recently, the emergence of \emph{ultra large-scale systems} (ULSS)—such as low Earth orbit mega-constellations \citep{DaehnickKlinghofferMaritzEtAl2020,XieZhanEtAl2021}—has brought CL challenges to a new level \citep[Section~5]{PedrosoBatistaEtAl2025ULSS}. Indeed, most well-performing CL solutions rely on algorithms whose communication, computation, or memory requirements per agent grow with the size of the entire network. Beyond a certain scale, such procedures inevitably become infeasible. The term ULSS was introduced in the computational engineering field by \citet{FeilerGabrielGoodenoughEtAl2006} and, more recently, in the context of control systems by \citet{PedrosoBatistaEtAl2025ULSS} to formally characterize this regime. There, it is argued that the transition to the ultra large scale (ULS) calls for additional requirements beyond standard information-structure constraints to ensure real-world implementability. In particular, to be scalable to ULSS, both working and design phases of CL control algorithms must be entirely distributed across agents and their per-agent communication, computation, and memory usage must not grow with the number of agents. These disruptive constraints make ULSS fundamentally different from classical large-scale systems and demand new algorithmic paradigms. Motivated by this paradigm shift, this paper focuses on \emph{CL algorithms on the ultra large scale}.

\subsection{State-of-the-art}\vspace{-0.2cm}
Several families of CL solutions have been proposed. First, centralized-equivalent schemes—such as junction-tree protocols \citep{DaiMuWu2016} or measurement-bookkeeping approaches \citep{LeungBarfootLiu2009}—are consistent and can reproduce centralized performance but incur per-agent communication and memory requirements that scale with the number of agents, making them infeasible on an ultra large scale. Second, well-performing consistent CL can be achieved by leveraging decentralized filtering solutions for linear time-varying systems \citep{PedrosoBatistaEtAl2022DistributedKalman}. However, the design of distributed filter gains is centralized and its computational burden still scales with the number of agents. Third, another common strategy is to approximate covariance propagation to reduce complexity \citep{MadhavanFregeneEtAl2002,MadhavanFregeneEtAl2004}. Yet, this approach is prone to double-counting—an underestimation of correlations between measurements that leads to overconfident and inconsistent estimates, as exemplified in \citet{HowardMataricEtAl2003,PanzieriPascucciSetola2006}. In \citet{Martinelli2007}, this idea is extended to a hierarchical extended Kalman filter that partitions the network of $N$ agents into $K$ groups (also called clusters) of $M$ agents. This scheme requires a leader agent in each cluster that fuses all of the measurements of the cluster at a computational cost of $\mathcal{O}(M^3)$ and a central node that estimates all of the leaders' states at a computational cost of $\mathcal{O}(K^3)$. Since $K=N/M$, the computational burden of the central node grows with $\mathcal{O}((N/M)^3)$, which still grows with the number of agents $N$, and thus is not scalable in an ULSS.
Fourth, covariance intersection (CI) methods  \citep{Julier1997,Carrillo-ArceNerurkarGordilloEtAl2013} have been proposed to eliminate double-counting. For instance, \citet{Carrillo-ArceNerurkarGordilloEtAl2013} achieves consistent CL with communication and computational complexities independent of the network size, albeit at the expense of overly conservative estimates. Moreover, this result relies on the assumption that an agent can predict the state of a neighbor using only its own state estimate and a relative measurement, an assumption that does not always hold. Fifth, split covariance intersection (SCI) methods \citep{LiNashashibiEtAl2013,WanasingheEtAl2014,JulierUhlmann2017} were proposed to attenuate the over-conservatism of CI methods by splitting the covariance contributions in correlated and uncorrelated components and intersecting the correlated components only. As a result, SCI successfully reduces conservatism w.r.t.\ CI while maintaining consistency and, thus, it is the preferred option in many applications \citep{LiYang2019,FangLiEtAl2021,FangLiEtAl2025}. Sixth, recent research  focuses on achieving better performance, albeit at the expense of consistency. For instance, \citet{LuftEtAl2016,LuftEtAl2018} proposed a promising CL method that distributively approximates correlations between each pair of agents, while supporting asynchronous communication and measurements. This approach achieves good performance with per-agent communication requirements that do not scale with the number of agents. However, its per-agent memory requirements grow linearly with the number of agents, and although an approximation error analysis is provided, the method is not provably consistent.

\vspace{0.3cm}
\begin{mdframed}[style=callout]
	In summary, there have been some contributions towards CL strategies based on a covariance intersection approach that are suitable for ULSS. Since these are conservative, recently proposed methods focus on achieving better performance. However, this is often obtained at the expense of requiring local resource usage that scales with the number of agents in the network or at the expense of consistency. As a result, there is a gap to develop CL algorithms that are \emph{well-performing}, \emph{consistent}, and \emph{scalable} to an ultra large scale.
\end{mdframed}

\subsection{Contribution}
This paper addresses the gap identified in the state-of-the-art. Specifically, our contribution is twofold:
\begin{itemize}\itemsep0.4em
	\item We state the problem of designing CL filter gains for each agent subject to the ultra large scale feasibility requirements. We show that this design problem can be expressed in an \emph{overlapping covariance intersection} framework, which is a CI-like framework with partial structural information that was recently introduced and analyzed in \citet{PedrosoBatistaEtAl2026OCI}.
	\item We propose a CL algorithm that relies on an \emph{optimal conservative distributed computation of the covariance propagation} given locally known information. We show that the resulting CL algorithm is provably consistent and is scalable to an ultra large scale. Numerical simulations show that its performance is significantly better than a SCI filter.
\end{itemize}

\subsection{Notation}
The sets of $n\times n$ real symmetric positive semidefinite and positive definite matrices are denoted by $S^n_+$ and  $S^n_{++}$, respectively. Moreover, $\mathbf{P} \succ \zeros$ ($\mathbf{P}\succeq \zeros$) denotes that matrix $\mathbf{P} \in \R^{n\times n}$ is positive definite (semidefinite) and $\mathbf{P} \succ \mathbf{Q}$ ($\mathbf{P} \succeq \mathbf{Q}$) denotes that matrix $\mathbf{P}-\mathbf{Q} \in \R^{n\times n}$  is positive definite (semidefinite). Given a matrix $\mathbf{A}\in \R^{n\times m}$, $\mathbf{A}^+$ denotes the Moore-Penrose inverse of $\mathbf{A}$ \citep[Chap.~1.6]{Zhang2005}.

\section{Problem Statement}\label{sec:probelm_statement}

\subsection{Model of a Network of Output-Coupled Systems} \vspace{-0.2cm}

In this paper, we follow the overarching formulation of a multi-agent system as described in \cite[Section~3]{PedrosoBatistaEtAl2025ULSS}. Consider a network of $N$ interconnected systems, $\Scal_i$, each associated with one computing unit $\Tcal_i$, with $i = 1,2,\dots,N$. Each system has decoupled linear dynamics and has access to linear sensor outputs, which are coupled with a set of other subsystems. The topology of this network is, thus, defined by the output couplings between systems, and it may be time-varying, where time is denoted by $k \in \Nz$. Such output coupling topology may be represented by a directed graph $\Gcal_o(k):=(\Vcal,\Ecal_o(k))$, composed of a set $\Vcal = \{1,2,\ldots,N\}$ of vertices and a set $\Ecal_o(k)$ of directed edges. An edge $e$ incident on vertices $i$ and $j$, directed from $j$ towards $i$, is denoted by $e = (j,i)$. For a vertex $i$, its in-neighborhood $\Ical_i^o(k)$ is the set of vertices from which there is an edge in $\Ecal_o(k)$ directed towards $i$ and its out-neighborhood $\Ocal_i^o(k)$ is the set of vertices towards which there is an edge in $\Ecal^o(k)$ directed away from $i$. In this framework, each system is represented by a vertex, i.e., system $\Scal_i$ is represented by node $i$, and if $\Scal_i$ has access to an output that depends on the state of system $\Scal_j$, then this coupling is represented by an edge directed from vertex $j$ towards vertex $i$, i.e., edge $e = (j,i) \in \Ecal_o(k)$. It is important to stress that the direction of the edge matters. Note, for instance, that the fact that $\Scal_i$ has access to an output that depends on the state of system $\Scal_j$ does not, necessarily, imply the converse. Also, given that the goal of each system $\Scal_i$ is to estimate its own state, it is assumed henceforth, without loss of generality, that the output of system $\Scal_i$ depends on its state, i.e., $i\in \Ical_i^o(k)$. No further assumptions are made on the topology of the directed graph.  

We model each system $\mathcal{S}_i$ by the discrete-time linear system%
\begin{equation}\label{eq:sys_model}
\begin{cases}
	\mathbf{x}_i(k+1) = \mathbf{A}_i(k)\mathbf{x}_i(k) + \mathbf{B}_i(k)\mathbf{u}_i(k) + \mathbf{w}_i(k)\\
	\mathbf{y}_i(k) = \sum_{j\in \Ical_i^o(k)}\mathbf{C}_{ij}(k)\mathbf{x}_j(k) + \mathbf{v}_i(k),
\end{cases}
\end{equation}
where $\mathbf{x}_i(k)\in\R^{n_i}$ is the state vector, $\mathbf{u}_i(k)\in \R^{m_i}$ is the input vector, which is assumed to be known, and $\mathbf{y}_i(k)\in\R^{o_i}$ is the output vector, all of system $\Scal_i$; matrices $\mathbf{A}_i(k) \in \R^{n_i\times n_i}$, $\mathbf{B}_i(k)\in \R^{n_i\times m_i}$, and $\mathbf{C}_{ij}(k) \in \R^{o_i \times n_j}$ with $j\in \Ical_i^o(k)$ are time-varying matrices that model the dynamics and output of system $\Scal_i$; vector $\mathbf{w}_i(k)\in\mathbb{R}^{n_i}$ is the process noise, modeled as a zero-mean white Gaussian process with associated covariance matrix $\mathbf{Q}_i(k) \succeq \zeros_{n_i\times n_i}$, vector $\mathbf{v}_i(k) \in\mathbb{R}^{o_i}$ is the observation noise, modeled as a zero-mean white Gaussian process. It is considered that the observation noise vectors {$\mathbf{v}_i(k)$} and $\mathbf{v}_j(k)$ of systems $\Scal_i$ and $\Scal_j$, respectively, are (possibly) correlated if their outputs are coupled with each other, i.e., $j\in\Ical_i^o(k)$ and $i \in \Ical_j^o(k)$. 
To illustrate the need to introduce such a correlation, consider, for example, two satellites with GNSS receivers. It is possible to obtain a very precise common relative position measurement between the two satellites using carrier-phase differential GNSS \citep{MontenbruckEbinumaEtAl2002}, but the fact that the measurement is common couples the observation noise of both satellites.
The observation noise process, thus, follows $\EV[\mathbf{v}_i(k)\mathbf{v}_i^\top(k)] = \mathbf{R}_{ii}(k) \succ \zeros_{o_i\times o_i}$ and
\begin{equation*}
	\EV[\mathbf{v}_i(k)\mathbf{v}_j^\top(k)] = \begin{cases}
		\mathbf{R}_{ij}(k), & j\in\Ical_i^o(k) \;\text{and} \; i \in \Ical_j^o(k)\\
		\zeros_{o_i\times o_j}, &    \text{otherwise},
	\end{cases}
\end{equation*}
where $\mathbf{R}_{ij}(k) \in \R^{o_i\times o_j}$ are time-varying covariance matrices. Note that if the outputs of $\Scal_i$ and $\Scal_j$ are not coupled, i.e., $j\notin \Ical_i^o(k)$ or $i\notin \Ical_j^o(k)$, the observation noise vectors are not coupled. Crucially, interactions between agents in typical ultra large-scale multi-agent systems are local, thus we assume that $|\Ical_i^o(k)|$ is bounded in time and the bound does not grow with the number of systems, i.e., there exists $M\in \N$ such that $|\Ical_i^o(k)| \leq M$ for all $k\in \Nz$ and all $i\in \{1,\ldots,N\}$ and $M$ grows with $\Ocal(1)$ w.r.t.\ $N$.

The computational unit of each system may be able to establish a communication link with the computational unit of other systems to exchange information. We define another (possibly time-varying) directed graph $\Gcal_c(k):=(\Vcal,\Ecal_c(k))$ that models the topology of the communication network. For each vertex $i\in \Vcal$, one can also define the in-neighborhood $\Ical^c_i(k)$ and  out-neighborhood $\Ocal^c_i(k)$. In line with \citet{PedrosoBatistaEtAl2025ULSS}, we follow the convention that $\Gcal_c(k)$ does not have self-loops for $k\in \Nz$. Crucially, since the possibility of establishing a communication link is especially likely between systems that are output-coupled, in this paper we focus on the particular case whereby the communication topology follows the output coupling topology. Specifically, the communication graph $\Gcal_c(k)$ is composed of all the (nonself) edges of $\Gcal_o(k)$ for $k\in \Nz$.

\subsection{Distributed Cooperative Localization} \vspace{-0.2cm}

In this paper, CL is achieved by implementing a local dynamical filter in each system, based on prediction-update steps employed in a Kalman filter. In a centralized configuration, each system would have access to all systems' output and state estimates, at the expense of all-to-all communication via a central system. However, it is well-known that transmitting outputs and state estimates among systems without significant delay is rarely feasible, so communication is restricted to $\Gcal_c(k)$ and we consider a  distributed configuration instead. Specifically, at each discrete-time instant $k$, the only output known to $\Scal_i$ is $\mathbf{y}_i(k)$. Specifically, we focus on local Luenberger observers, whose prediction and update steps are given by
\begin{equation}\label{eq:dist_filter}
	\!\!\!\begin{cases}
		\hat{\mathbf{x}}_i(k|k\!-\!1) =  \mathbf{A}_i(k\!-\!1)\hat{\mathbf{x}}_i(k\!-\!1) + \mathbf{B}_i(k\!-\!1)\mathbf{u}_i(k\!-\!1)\!\!\!\!\!\\
		\hat{\mathbf{x}}_i(k|k) = 	\mathbf{\hat{x}}_i(k|k-1) + \mathbf{K}_i(k)\!\left(\mathbf{y}_i(k) -  \sum_{j\in \Ical^o_i(k)} \!\mathbf{C}_{ij}(k)\hat{\mathbf{x}}_j(k|k\!-\!1)\right)\!,
	\end{cases}\!\!\!\!\!\!\!\!\!
\end{equation}
where $\hat{\mathbf{x}}_i(k|k-1)$ denotes the predicted state estimate, $\hat{\mathbf{x}}_i(k|k)$ denotes the updated state estimate, and $\mathbf{K}_i(k)$ denotes the local filter gain for system $\Scal_i$ at time $k$. Fig.~\ref{fig:sys_blk_draft} shows a block diagram of the CL solution from the perspective of system $\Scal_i$. From \eqref{eq:dist_filter} and Fig.~\ref{fig:sys_blk_draft}, one can already note the necessity of the communication links in $\Gcal_c(k)$ for the transmission of $\hat{\mathbf{x}}_j(k|k-1)$ at time $k-1$ from $\Tcal_j$ with $j\in \Ical_i^o(k)$ to $\Tcal_i$.

\begin{figure}[ht]
	\centering
	\begin{tikzpicture}[auto, node distance=1.2cm,>=latex']
		\node [block] (Si) {$\mathcal{S}_i$};
		\node [block, below=of Si, yshift=+.5cm] (Ti) {$\mathcal{T}_i$};
		\node [left=of Si.155] (wi) {$\mathbf{w}_i(k-1), \mathbf{v}_i(k)$}; 
		\node [left=of Si.205] (ui) {};
		\node [above=of Si,yshift=-.5cm, label={[align=center, yshift=-0.3cm] above:$\mathbf{x}_j(k),\: j\in \Ical_i^o(k)$}] (xj) {}; 
		\node [left=of Ti.205] (Tj-in) {$\mathcal{T}_j,\: j\in \mathcal{I}_i^c(k)\;$};
		\node [below=of Ti, yshift=+.5cm] (Tj-out) {$\;\mathcal{T}_j, \:j\in \mathcal{O}_i^c(k)$};
		\node [right=of Ti.335] (xi-hat) {$\;\hat{\mathbf{x}}_i(k)$};
		\draw[->] (wi) -- (Si.155);
		\draw[->] (Tj-in) -- (Ti.205);
		\draw[->] (Ti) -- (Tj-out) ;
		\draw[->] (Ti.335) -- (xi-hat) ;
		\draw[->] (xj) -- (Si) ;
		\draw[dashed,->] (Ti.155) -- ++ (-1,0) |- node [pos=0.25] {$\mathbf{u}_i(k-1)$}  (Si.205) ;
		\draw[->] (Si.335) -- ++ (1,0) |- node [pos=0.25] {$\mathbf{y}_i(k)$}  (Ti.25) ;
	\end{tikzpicture}    
	\caption{Block diagram of the CL solution from the perspective of $\mathcal{S}_i$.  Although this is outside the scope of the paper, the dashed line depicts that the input to system $\Scal_i$ could be computed using the estimate $\hat{\mathbf{x}}_i(k)$.}
	\label{fig:sys_blk_draft}
\end{figure}

\begin{remark}
	It is well-known that, even for linear time-invariant (LTI) systems, linear controllers may not be optimal among the family of distributed controllers \citep{Witsenhausen1968}. As a result, the filter \eqref{eq:dist_filter} is suboptimal in general. Indeed, it is shown in \citet{PedrosoBatista2023DecentralizedMHE} that, even for LTI systems, a linear moving horizon estimation approach may unlock significant performance. Nevertheless, given their simplicity and widespread use, we focus on filters of the form \eqref{eq:dist_filter}. \hfill$\triangle$
\end{remark}

We now turn to the concept of consistency of a filter. Define the estimation error $\tilde{\mathbf{x}}_i(k|k-1) := \mathbf{\hat{x}}_i(k|k-1)-\mathbf{x}_i(k)$ and $\tilde{\mathbf{x}}_i(k|k) := \hat{\mathbf{x}}_i(k|k)-\mathbf{x}_i(k)$. Intuitively, a local filter is consistent if the estimation error covariance matrices provided by the filter are conservative w.r.t.\ the true estimation error covariance. A formal definition is presented below.

\begin{definition}\label{def:consistency}
	A filter is said to be \emph{consistent} (also sometimes called \emph{conservative}) if, at each time $k$, it computes an unbiased estimate $\hat{\mathbf{x}}_i(k|k)$ of $\mathbf{x}_i(k)$ and an estimation error covariance matrix $\mathbf{X}_{ii}(k|k) \succeq \zeros$ that is a bound on the true estimation covariance, i.e., $\mathbf{X}_{ii}(k|k)\succeq \EV[\tilde{\mathbf{x}}_i(k|k)\tilde{\mathbf{x}}_i^\top(k|k)]$. \hfill$\triangle$
\end{definition}

\begin{remark}
	The synchronization of the local filters \eqref{eq:dist_filter} can be readily implemented when all systems have access to a common external signal. Such signals are common in practical applications, and their use does not violate the ULSS requirements. For example, if each system is equipped with a GNSS receiver, GNSS signals can be used for precise synchronization. In challenging environments where such signals are typically unavailable, such as underwater applications, synchronization requires complex protocols such as clock disciplining. \hfill$\triangle$
\end{remark}



Naturally, the goal is to design, for each system $\Scal_i$ with $i = 1,2,\ldots,N$ and at each time $k\in \N$, the local filter gains $\mathbf{K}_i(k)$ such that: (i)~the filter is consistent; and (ii)~the covariance bounds $\mathbf{X}_{ii}(k|k)$ are optimized according to a prescribed performance criterion. We formulate this design problem next.

\subsection{Distributed Design Problem} \vspace{-0.2cm}

This paper focuses on networks of systems when $N$ is large enough such that a centralized design procedure is not computationally feasible, i.e., it is not possible to centrally synthesize the gains of the distributed filter in \eqref{eq:dist_filter}. These systems are called \emph{ultra large-scale systems}, a definition that was recently introduced in \citet{PedrosoBatistaEtAl2025ULSS}. The synthesis stage of this emerging class of systems is notoriously challenging and is subject to stringent requirements for a scalable deployment. Briefly, in this case, the synthesis stage entails \emph{locally designing} the filter gains $\mathbf{K}_i(k)$ in real time and with limited communication, computational, and memory resources according to a performance criterion. The distributed design problem can then be formally stated from the perspective of each computational unit $\Tcal_i$ as follows.

\begin{problem}\label{prob:design}
	At each time $k$, in each computational unit $\Tcal_i$, 
	with knowledge of: \vspace{-0.2cm}
	\begin{itemize}\itemsep0.2em 
		\item The systems with which $\Scal_i$ has output couplings at time $k$, i.e., $\Ical^o_i(k)$;
		\item The input at time $k-1$, i.e., $\mathbf{u}_i(k-1)$;
		\item The dynamic matrices $\mathbf{A}_i(k\!-\!1)$,  $\mathbf{B}_i(k\!-\!1), \mathbf{Q}_i(k\!-\!1), \mathbf{C}_{ij}(k)$ with $j\in \Ical^o_i(k)$, and $\mathbf{R}_{ij}(k)$ with $j\in \Ical^o_i(k)$;
		\item Information received from $\Tcal_j$ with $j\in \Ical^o_i(k)$ ; \vspace{-0.2cm}
	\end{itemize}
	design a gain $\mathbf{K}_i(k)$ to compute:\vspace{-0.2cm}
	\begin{itemize}\itemsep0.2em
		\item An unbiased estimate $\hat{\mathbf{x}}_i(k|k)$ of $\mathbf{x}_i(k)$, resorting to the distributed filter \eqref{eq:dist_filter};
		\item A covariance bound $\mathbf{X}_{ii}(k|k)\succeq \EV[\tilde{\mathbf{x}}_i(k|k) \tilde{\mathbf{x}}_i^\top(k|k)]$;
	\end{itemize}
	such that $J_i(\mathbf{X}_{ii}(k|k))$ is minimized, where $J_i:S_+^{n_i} \to \R$ is a performance metric. Furthermore, the design procedure must follow the ultra large-scale feasibility requirements in \citet[Table~1]{PedrosoBatistaEtAl2025ULSS}, i.e., the computational, memory, and communication load in $\Tcal_i$ must not grow with $N$ and $\Tcal_i$ cannot use information from $\Tcal_j$ with $j\in \Ical^o_i(k)$ that has been computed in $\Tcal_j$ at time $k$ and transmitted through $\Gcal_c$.
	
\end{problem}

The design problem in Problem~\ref{prob:design} is \emph{distributed} in the sense that the synthesis of the filter gains is distributed across the computational units of systems in the network. Specifically, resorting to local information, each computational unit designs a local gain $\mathbf{K}_i(k)$ to obtain an unbiased and consistent estimate of the local state.
The performance metrics $J_i$ with $i = 1,2,\ldots,N$ are assumed to satisfy the following monotonicity criterion.

\begin{assumption}\label{ass:J}
	For all $i = 1,2,\ldots,N$, given $\mathbf{X}, \mathbf{Y} \in S^{n_i}_+$, the map $J_i:S_+^{n_i} \to \R$ is such that $\mathbf{X} \succ  \mathbf{Y} \implies J_i(\mathbf{X}) > J_i(\mathbf{Y})$. \hfill$\triangle$
\end{assumption}

This monotonicity assumption on $J_i$ is very mild. Intuitively, let $\mathbf{B}_1$ and $\mathbf{B}_2$ be error covariance matrices. Assumption~\ref{ass:J} enforces that if $\mathbf{B}_1 \prec \mathbf{B}_2$, i.e., the covariance $\mathbf{B}_2$ portrays a larger spread of the error distribution in every direction than $\mathbf{B}_1$, then $J_i(\mathbf{B}_2)>J_i(\mathbf{B}_1)$. Common criteria in fusion applications such as the trace or determinant satisfy Assumption~\ref{ass:J}.

\begin{remark}
	Note that if the dynamics of \eqref{eq:sys_model} are nonlinear, an extended Kalman filter framework also leads to the design problem in Problem~\ref{prob:design}. In that case, the dynamic matrices are instead obtained by linearizing the nonlinear dynamics about the last state estimate. It is worth remarking that the analysis of the consistency of the filter becomes significantly more complex. That is because the estimation error distribution is not fully characterized by its second moment anymore, even if the process and sensor noise are Gaussian distributed. In any case, the same consideration applies in the centralized setting. \hfill$\triangle$
\end{remark}


\section{Distributed Synthesis as an Overlapping Covariance Intersection Problem}\label{sec:distributed_synthesis}

In Section~\ref{sec:MVE}, we show that the design of local minimum variance estimation gains, which is a particular case of Problem~\ref{prob:design} when the performance criterion is $J_i(\cdot) = \tr(\cdot)$, has local information requirements that violate the ultra large scale requirements of Problem~\ref{prob:design}. In Section~\ref{sec:linear_estimator}, we formulate the filter design problem as a problem of designing an optimal linear estimator. In Section~\ref{sec:partial_info}, we analyze the partial information that is available in each computational unit for the design problem. In Section~\ref{sec:OCI}, we cast the distributed design problem as a generalized covariance intersection problem, which expresses the fusion gains as the minimization of a performance metric $J_i(\cdot)$ of the worst-case uncertainty under the information that is locally available.


\subsection{Local Information is Insufficient for Minimum Variance Estimation}\label{sec:MVE} \vspace{-0.2cm}

We consider that at each time $k$, each $\Tcal_i$ has access to (i)~an unbiased estimate $\hat{\mathbf{x}}_i(k|k-1)$ of $\mathbf{x}_i(k)$ (obtained from the application of the CL algorithm in the previous instant); (ii)~unbiased estimates $\hat{\mathbf{x}}_j(k|k-1)$ of $\mathbf{x}_j(k)$ with $j \in \Ical^o_i(k)$ obtained through communication; and (iii)~the output $\mathbf{y}_i(k)$. It is well-known \citep[Theorem~3.1 and Example~3.3]{AndersonMoore1979} that at each time $k$, the local minimum-variance estimate, i.e., with $J_i(\cdot) = \tr(\cdot)$, is given by \eqref{eq:dist_filter} with the optimal gain
\begin{equation}\label{eq:optimal_K}
	\mathbf{K}_i^{\mathrm{MV}}(k) = \EV[\tilde{\mathbf{x}}_i(k|k-1)\tilde{\mathbf{\bepsilon}}_i^\top(k)]\EV[\tilde{\mathbf{\bepsilon}}_i(k) \tilde{\mathbf{\bepsilon}}_i^\top(k)]^{-1},
\end{equation}
where $\tilde{\bepsilon}_i(k) := \mathbf{y}_i(k) -  \sum_{j\in \Ical^o_i(k)} \mathbf{C}_{ij}(k)\hat{\mathbf{x}}_j(k|k-1)$. However, notice that knowledge about the covariance matrices $\EV[\tilde{\mathbf{x}}_i(k|k-1)\tilde{\mathbf{\bepsilon}}_i^\top(k)]$ and $\EV[\tilde{\mathbf{\bepsilon}}_i(k) \tilde{\mathbf{\bepsilon}}_i^\top(k)]$ is required for the synthesis of the optimal minimum-variance local gain $\mathbf{K}_i^{\mathrm{MV}}(k)$ according to \eqref{eq:optimal_K}. However, without all-to-all communication (which violates the ultra large-scale synthesis requirements), it is not possible to compute these covariance matrices in $\Tcal_i$ at time $k$. To see why, define $\mathbf{P}_{ij}(k|k) := \EV[\tilde{\mathbf{x}}_i(k|k) \tilde{\mathbf{x}}_j^\top(k|k)]$ and $\mathbf{P}_{ij}(k|k-1) := \EV[\tilde{\mathbf{x}}_i(k|k-1)\tilde{\mathbf{x}}_j^\top(k|k-1)]$. Assuming that the initial estimates $\tilde{\mathbf{x}}_i(0)$, for $i \in \{1,2,\ldots,N\}$, are unbiased and have covariance matrices $\mathbf{P}_{ij}(0|0)$, for $i,j\in \{1,2,\ldots,N\}$, from the system model in \eqref{eq:sys_model} and the filter in \eqref{eq:dist_filter} one can obtain the well-known recurrence characterized by
\begin{equation}\label{eq:propagation_P_pred}
	\begin{split}
		\mathbf{P}_{pq}(k|k\!-\!1) = \delta_{pq}\mathbf{Q}_{pp}(k\!-\!1) + \mathbf{A}_{pp}(k\!-\!1)\mathbf{P}_{pq}(k\!-\!1|k\!-\!1)\mathbf{A}_{qq}^\top(k\!-\!1) 
	\end{split}
\end{equation}
and
\begin{equation}\label{eq:propagation_P}
	\begin{split}
		\mathbf{P}_{pq}(k|k) = \mathbf{K}_p(k)\mathbf{R}_{pq}(k)\mathbf{K}_q^\top(k)  + \sum\nolimits_{r\in \Ical_p^o(k)}\sum\nolimits_{s\in \Ical_q^o(k)}(\delta_{pr}\mathbf{I}-\mathbf{K}_p(k)\mathbf{C}_{pr}(k))\mathbf{P}_{rs}(k|k-1)(\delta_{qs}\mathbf{I}-\mathbf{K}_q(k)\mathbf{C}_{qs}(k))^\top,
	\end{split}
\end{equation}
which can be employed to write the expectations in \eqref{eq:optimal_K} as
\begin{equation*}
	\begin{split}
		\EV[\tilde{\mathbf{\bepsilon}}_i(k)\tilde{\mathbf{\bepsilon}}_i^\top(k)] = \mathbf{R}_{ii}(k)+ \sum\nolimits_{p\in \Ical_i^o(k)}\sum\nolimits_{q\in \Ical_i^o(k)}\mathbf{C}_{ip}(k)\mathbf{P}_{pq}(k|k-1)\mathbf{C}_{iq}^\top(k)
	\end{split}
\end{equation*}
and
\begin{equation*}
	\EV[\tilde{\mathbf{x}}_i(k|k-1)\tilde{\mathbf{\bepsilon}}_i^\top(k)] = \sum\nolimits_{p\in \Ical_i^o(k)}\mathbf{P}_{ip}(k|k-1)\mathbf{C}_{ip}^\top(k).
\end{equation*}
To compute the minimum-variance gain \eqref{eq:optimal_K} locally in $\Tcal_i$, one would have to compute $\mathbf{P}_{pq}(k|k-1)$ and, from \eqref{eq:propagation_P_pred}, have access to $\mathbf{P}_{pq}(k-1|k-1)$ with $p,q\in \Ical_i^o(k)$ either by computing them in $\Tcal_i$ or by communication with other computational units. In turn, computing $\mathbf{P}_{pq}(k-1|k-1)$ with $p,q\in \Ical_i^o(k)$ would require access (by computing it or by communication) to $\mathbf{P}_{rs}(k-1|k-2)$ with $(r,s)\in \Ical_p^o(k) \times  \Ical_q^o(k)$, and so forth. Because of that, in general, the exact cooperative propagation of $\mathbf{P}_{pq}(k)$ is not possible without all-to-all communication or memory and computational requirements that grow with $N$, which violates the synthesis requirements in Problem~\ref{prob:design}.

In this paper, we follow an approach inspired by \citet{PedrosoBatista2023DistributedEKF} and novel CI tools \citep{PedrosoBatistaEtAl2026OCI,PedrosoBatistaEtAl2026UnificationCI} to devise a distributed synthesis procedure that is consistent. 

\vspace{0.3cm}
\begin{mdframed}[style=callout]
	The main idea is that each computational unit $\Tcal_i$ keeps and updates a bound on error covariances between the systems in $\Ical_i^o$. This information alone is not enough to propagate $\mathbf{P}_{pq}(k-1|k-1)$ with $p,q\in \Ical_i^o$, exactly, as aforementioned. Instead, we use the concept of \emph{robustness to information uncertainty}. Specifically, the local gains are designed to minimize the worst-case consistent bound on $\EV[\tilde{\mathbf{x}}_i(k|k)\tilde{\mathbf{x}}_i^\top(k|k)]$ given the information that is known. This problem is cast as a novel formulation of a covariance intersection problem~\citep{ForslingNoackEtAl2024}, so-called overlapping covariance intersection (OCI)~\citep{PedrosoBatistaEtAl2026OCI}.
\end{mdframed}

\subsection{Formulation as Optimal Linear Estimator}\label{sec:linear_estimator} \vspace{-0.2cm}

To cast the filtering problem as an OCI problem \citep{PedrosoBatistaEtAl2026OCI} we follow a framework that is closer to the signal processing literature. Specifically, consider at time $k$ a computational unit $\Tcal_i$ whose goal is to obtain an estimate $\hat{\mathbf{x}}_i(k|k)$. It has access to two unbiased (partial) estimates of $\hat{\mathbf{x}}_i(k|k)$. The first estimate is $\mathbf{z}_{i,1}(k) := \hat{\mathbf{x}}_i(k|k-1) = \mathbf{A}_i(k-1)\hat{\mathbf{x}}_i(k-1|k-1) + \mathbf{B}_i(k-1)\mathbf{u}_i(k-1)$, where $\hat{\mathbf{x}}_i(k-1|k-1)$ is obtained from the application of the CL algorithm in the previous instant. Notice that the estimate $\mathbf{z}_{i,1}(k)$ is unbiased because 
\begin{equation}\label{eq:z1_unbiased}
	\begin{split}
		\!\!\mathbf{e}_{i,1}(k) &= \mathbf{z}_{i,1}(k)-\mathbf{x}_i(k) \\
		& = \mathbf{A}_i(k\!-\!1)\hat{\mathbf{x}}_i(k\!-\!1|k\!-\!1) \!+\! \mathbf{B}_i(k\!-\!1)\mathbf{u}_i(k\!-\!1)- 	\mathbf{A}_i(k\!-\!1)\mathbf{x}_i(k\!-\!1) - \mathbf{B}_i(k\!-\!1)\mathbf{u}_i(k\!-\!1) - \mathbf{w}_i(k\!-\!1)\\
		& = \mathbf{A}_i(k-1)\tilde{\mathbf{x}}_i(k-1|k-1) - \mathbf{w}_i(k-1) \\
		& = \tilde{\mathbf{x}}_i(k|k-1) 
	\end{split}
\end{equation}
and $\tilde{\mathbf{x}}_i(k-1|k-1)$ and $\mathbf{w}_i(k-1)$ are unbiased by hypothesis. The second estimate is
\begin{equation*}
	\mathbf{z}_{i,2}(k):= \mathbf{y}_i(k)-\sum\nolimits_{p\in \Ical^o_i(k)\setminus \{i\}} \mathbf{C}_{ip}(k)\hat{\mathbf{x}}_p(k|k-1),
\end{equation*}
which is unbiased because 
\begin{equation}\label{eq:z2_unbiased}
	\begin{split}
		\!\!\mathbf{e}_{i,2}(k) &=  \mathbf{z}_{i,2}(k)- \mathbf{C}_{ii}(k)\mathbf{x}_i(k) \\
		& = \mathbf{y}_i(k)- \sum\nolimits_{p\in \Ical^o_i(k)\setminus \{i\}}  \mathbf{C}_{ip}(k)\hat{\mathbf{x}}_p(k|k-1) - \mathbf{C}_{ii}(k)\mathbf{x}_i(k)\!\!\!\!\!\! \\
		& = \sum\nolimits_{p\in \Ical^o_i(k)}  \mathbf{C}_{ip}(k)\mathbf{x}_p(k) + \mathbf{v}_i(k) 
		-\sum\nolimits_{p\in \Ical^o_i(k)\setminus \{i\}}\mathbf{C}_{ip}(k)\hat{\mathbf{x}}_p(k|k\!-\!1) \!-\! \mathbf{C}_{ii}(k)\mathbf{x}_i(k)\!\!\\
		& = -\sum\nolimits_{p\in \Ical^o_i(k)\setminus \{i\}} \mathbf{C}_{ip}(k)\tilde{\mathbf{x}}_p(k|k-1)+ \mathbf{v}_i(k)
	\end{split}
\end{equation}
and  $\tilde{\mathbf{x}}_p(k|k-1)$ and $\mathbf{v}_i(k)$ are unbiased by hypothesis. Concatenating the two (partial) estimates one can write
\begin{equation*}
	 \underbrace{\begin{bmatrix}
		\mathbf{z}_{i,1}(k) \\
		\mathbf{z}_{i,2}(k)
	\end{bmatrix}}_{=: \mathbf{z}_{\Ical_i}(k)} = \underbrace{\begin{bmatrix}
		\mathbf{I}_{n_i} \\
		\mathbf{C}_{ii}(k)
	\end{bmatrix}}_{=:\mathbf{H}_{\Ical_i}(k)} \mathbf{x}_i(k) + \underbrace{\begin{bmatrix}
	\mathbf{e}_{i,1}(k) \\
	\mathbf{e}_{i,2}(k)
\end{bmatrix}}_{=:\mathbf{e}_{\Ical_i}(k)} ,
\end{equation*}
where $\EV[\mathbf{e}_{\Ical_i}(k)] = \zeros$ and, from \eqref{eq:z1_unbiased} and \eqref{eq:z2_unbiased}, $\mathbf{e}_{\Ical_i}(k)$ can be written as
\begin{equation}\label{eq:e_I_def}
	\mathbf{e}_{\Ical_i}(k) \!:=\! \begin{bmatrix}
		\tilde{\mathbf{x}}_i(k|k-1)\\ -{\sum_{p\in \Ical^o_i(k)\setminus \{i\}}}  \mathbf{C}_{ip}(k)\tilde{\mathbf{x}}_p(k|k-1)
	\end{bmatrix} + \begin{bmatrix}
		\zeros\\
		\mathbf{v}_i(k)
	\end{bmatrix}\!.
\end{equation}

First, a linear estimator for $\mathbf{x}_i(k)$ is of the form $\hat{\mathbf{x}}_i(k|k) = \mathbf{K}_{\Ical_i}(k)\mathbf{z}_{\Ical_i}(k)$ and it is unbiased if and only if $\mathbf{K}_{\Ical_i}(k)  \mathbf{H}_{\Ical_i}(k) = \mathbf{I}$. One can now expand the condition on the gain as
\begin{equation}\label{eq:gain_OCI_blocks}
	\begin{split}
		\underbrace{\begin{bmatrix}
				\tilde{\mathbf{K}}_i(k) & 
				\mathbf{K}_i(k)
		\end{bmatrix}}_{\mathbf{K}_{\Ical_i}(k)} \begin{bmatrix}
			\mathbf{I}_{n_i} \\
			\mathbf{C}_{ii}(k)
		\end{bmatrix} = \mathbf{I} \iff \tilde{\mathbf{K}}_i(k) = \mathbf{I}-\mathbf{K}_i(k)\mathbf{C}_{ii}(k).
	\end{split}
\end{equation}
Substituting \eqref{eq:gain_OCI_blocks} in the expression for the linear estimator yields
\begin{equation}\label{eq:unbiased_dist_filter}
	\begin{split}
		\hat{\mathbf{x}}_i(k|k) & = (\mathbf{I}-\mathbf{K}_i(k)\mathbf{C}_{ii}(k))\mathbf{z}_{i,1}(k) + \mathbf{K}_i(k)\mathbf{z}_{i,2}(k)\\
		& =  (\mathbf{I}-\mathbf{K}_i(k)\mathbf{C}_{ii}(k))\hat{\mathbf{x}}_i(k|k-1)+ \mathbf{K}_i(k) \bigg(\mathbf{y}_i(k)-\sum\nolimits_{p\in \Ical^o_i(k)\setminus \{i\}}  \mathbf{C}_{ip}(k)\hat{\mathbf{x}}_p(k|k-1)\bigg)\!\!\\
		& = 	\mathbf{\hat{x}}_i(k|k-1) + \mathbf{K}_i(k)\bigg(\mathbf{y}_i(k) - \sum\nolimits_{j\in \Ical^o_i(k)} \mathbf{C}_{ij}(k)\hat{\mathbf{x}}_j(k|k-1)\bigg),
	\end{split}
\end{equation}
which, unsurprisingly, is exactly the same expression as the Luenberger filter expression in \eqref{eq:dist_filter}. 

The estimation error covariance of the estimate $\hat{\mathbf{x}}_i(k|k)$ is denoted as $\mathbf{P}_{ii}(k|k):= \EV[\tilde{\mathbf{x}}_i(k|k)\tilde{\mathbf{x}}_i^\top (k|k)]$, and it can be expressed after algebraic manipulation using  $\hat{\mathbf{x}}_i(k|k) = \mathbf{K}_{\Ical_i}(k)\mathbf{z}_{\Ical_i}(k)$ as
\begin{equation}\label{eq:Pii}
	\mathbf{P}_{ii}(k|k) = \mathbf{K}_{\Ical_i}(k)  \EV[\mathbf{e}_{\Ical_i}(k)\mathbf{e}_{\Ical_i}^\top(k)]  \mathbf{K}_{\Ical_i}^\top(k).
\end{equation} 
According to Problem~\ref{prob:design}, one desires to design $\mathbf{K}_{\Ical_i}(k)$ such that $J_i(\mathbf{P}_{ii}(k|k))$ is minimized. We have now cast the filter design problem as finding  $\mathbf{K}_{\Ical_i}(k)$ such that $J_i(\mathbf{P}_{ii}(k|k))$ is  minimized subject to $\mathbf{K}_{\Ical_i}(k) \mathbf{H}_{\Ical_i}(k) = \mathbf{I}$. However, from \eqref{eq:Pii}, $\mathbf{P}_{ii}(k|k)$ depends on $\EV[\mathbf{e}_{\Ical_i}(k)\mathbf{e}_{\Ical_i}^\top(k)]$, which is not exactly known without all-to-all communication, as analyzed in Section~\ref{sec:MVE}. In what follows, we characterize the partial information about $\EV[\mathbf{e}_{\Ical_i}(k)\mathbf{e}_{\Ical_i}^\top(k)]$ that is available in $\Tcal_i$.


\subsection{Partial Information about
\texorpdfstring{$\EV[\mathbf{e}_{\Ical_i}(k)\mathbf{e}_{\Ical_i}^\top(k)]$}
{expected error covariance}}\label{sec:partial_info} \vspace{-0.2cm}

Notice that using \eqref{eq:unbiased_dist_filter}, the estimation errors $\tilde{\mathbf{x}}_p(k|k-1)$ with $p\in \Ical_i^o(k)$ can be expanded as
\begin{equation}\label{eq:tilde_x_p}
	\begin{split}
		\tilde{\mathbf{x}}_p(k|k-1) & = \mathbf{A}_p(k-1)\tilde{\mathbf{x}}_p(k-1|k-1) - \mathbf{w}_p(k)\\
		& = \mathbf{A}_p(k\!-\!1)\!\!\!\!\sum_{r\in \Ical^o_p(k-1)}\!\!\!\!(\delta_{pr}\mathbf{I}-\mathbf{K}_p(k\!-\!1) \mathbf{C}_{pr}(k\!-\!1))\tilde{\mathbf{x}}_r(k\!-\!1|k\!-\!2)+ \mathbf{A}_p(k\!-\!1)\mathbf{K}_p(k\!-\!1)\mathbf{v}_p(k\!-\!1) -  \mathbf{w}_p(k).
	\end{split}
\end{equation}
Denote $\tilde{\mathbf{x}}_{\Ical_i}(k|k-1) := \col(\tilde{\mathbf{x}}_p(k|k-1); p\in \Ical_i^o(k))$, $\tilde{\mathbf{x}}_{\Ical_i^2}(k-1|k-2) := \col(\tilde{\mathbf{x}}_r(k-1|k-2); r\in \Ical_p^o(k-1), p\in \Ical_i^o(k))$, and the respective covariance matrices as $\mathbf{P}_{\Ical_i}(k|k-1):= \EV[\tilde{\mathbf{x}}_{\Ical_i}(k|k-1)\tilde{\mathbf{x}}_{\Ical_i}^\top(k|k-1)]$ and $\mathbf{P}_{\Ical_i^2}(k-1|k-2):= \EV[\tilde{\mathbf{x}}_{\Ical^2_i}(k-1|k-2)\tilde{\mathbf{x}}_{\Ical_i^2}^\top(k-1|k-2)]$. Notice that $\mathbf{P}_{\Ical_i}(k|k-1)$ represents the estimation error covariance between neighbors of $\Scal_i$, i.e., $p\in \Ical_i^o(k)$, and $\mathbf{P}_{\Ical_i^2}(k|k-1)$ represents the estimation error covariance between neighbors of neighbors of $\Scal_i$, i.e., $r\in  \Ical_p^o(k-1)$ and $p \in  \Ical_i^o(k)$. By \eqref{eq:tilde_x_p} one can write
\begin{equation}\label{eq:P_I_2}
	\mathbf{P}_{\Ical_i^2}(k|k-1) = \mathbf{D}_{\Ical_i}(k-1) \mathbf{P}_{\Ical_i^2}(k-1|k-2)\mathbf{D}^\top_{\Ical_i}(k-1)+ \mathbf{E}_{\Ical_i}(k-1), 
\end{equation}
where the expressions for $\mathbf{D}_{\Ical_i}(k\!-\!1)$ and $\mathbf{E}_{\Ical_i}(k\!-\!1) \succeq \zeros$ follow immediately from \eqref{eq:tilde_x_p}.\footnote[1]{The expressions for $\mathbf{D}_{\Ical_i}(k-1)$, $\mathbf{E}_{\Ical_i}(k-1)$, $\mathbf{C}_{\Ical_i}(k)$, and $\mathbf{R}_{\Ical_i}(k)$ are not presented for the sake of conciseness and clarity. Nevertheless, these expressions are implemented in an open-access repository along with the code of the numerical simulations presented in Section~\ref{sec:results}.} Likewise, from \eqref{eq:e_I_def} and \eqref{eq:P_I_2}, one can express $\EV[\mathbf{e}_{\Ical_i}(k)\mathbf{e}_{\Ical_i}^\top(k)]$ as
\begin{equation}\label{eq:Eee}
\EV[\mathbf{e}_{\Ical_i}(k)\mathbf{e}_{\Ical_i}^\top(k)] = \mathbf{R}_{\Ical_i}(k) + \mathbf{C}_{\Ical_i}(k)\mathbf{P}_{\Ical_i^2}(k\!-\!1|k\!-\!2)\mathbf{C}^\top_{\Ical_i}(k),
\end{equation}
where the expressions for $\mathbf{C}_{\Ical_i}(k)$ and $\mathbf{R}_{\Ical_i}(k) \succeq \zeros$ follow immediately from \eqref{eq:e_I_def} and \eqref{eq:P_I_2}.\footnotemark~The formulation of the design problem as a generalized covariance intersection problem requires that $\mathbf{R}_{\Ical_i}(k)$ is positive definite. That can be guaranteed under the following mild assumption.

\begin{assumption}\label{ass:Q_pd}
	For all $k\in \Nz$ and for all $i= 1,2,\ldots, N$, $\mathbf{Q}_i(k) \succ \zeros$. \hfill$\triangle$
\end{assumption}

\begin{lemma}\label{lem:R_pd}
	Under Assumption~\ref{ass:Q_pd}, $\mathbf{R}_{\Ical_i}(k) \succ \zeros$ for all $i = 1,2,\ldots,N$ and all $k\in \N$.
\end{lemma}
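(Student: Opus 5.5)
We first make $\mathbf{R}_{\Ical_i}(k)$ explicit. It is the covariance of the part of $\mathbf{e}_{\Ical_i}(k)$ that is not driven by $\tilde{\mathbf{x}}_{\Ical_i^2}(k-1|k-2)$. Substituting \eqref{eq:tilde_x_p} into \eqref{eq:e_I_def} gives $\mathbf{e}_{\Ical_i}(k) = \mathbf{C}_{\Ical_i}(k)\tilde{\mathbf{x}}_{\Ical_i^2}(k-1|k-2) + \boldsymbol{\eta}_i(k)$, where
\begin{equation*}
\boldsymbol{\eta}_i(k) = \begin{bmatrix} \mathbf{A}_i(k\!-\!1)\mathbf{K}_i(k\!-\!1)\mathbf{v}_i(k\!-\!1) - \mathbf{w}_i(k\!-\!1) \\ -\sum_{p\in \Ical_i^o(k)\setminus\{i\}} \mathbf{C}_{ip}(k)\big(\mathbf{A}_p(k\!-\!1)\mathbf{K}_p(k\!-\!1)\mathbf{v}_p(k\!-\!1) - \mathbf{w}_p(k\!-\!1)\big) + \mathbf{v}_i(k) \end{bmatrix}.
\end{equation*}
Then $\mathbf{R}_{\Ical_i}(k) = \EV[\boldsymbol{\eta}_i(k)\boldsymbol{\eta}_i^\top(k)]$. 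This uses the fact that $\boldsymbol{\eta}_i(k)$ is uncorrelated with $\tilde{\mathbf{x}}_{\Ical_i^2}(k-1|k-2)$: that estimation error depends only on noises up to times $k-2$ (for $\mathbf{w}$) and $k-2$ (for $\mathbf{v}$), and the initial errors, and all noises are white. So $\mathbf{R}_{\Ical_i}(k)\succeq\zeros$ automatically, and the task is to show it is nonsingular.

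For positive definiteness we group the noise sources into three mutually independent collections:
\begin{itemize}
\item the process noises $\mathbf{w}_p(k-1)$ for $p\in\Ical_i^o(k)$;
\item the observation noises at time $k-1$;
\item the observation noise $\mathbf{v}_i(k)$.
\end{itemize}
Whiteness makes time $k$ and time $k-1$ noises uncorrelated. The process noise is independent of observation noise, and the $\mathbf{w}_p$ are mutually independent across agents. The covariance therefore splits as $\mathbf{R}_{\Ical_i}(k) = \mathbf{M}_w + \mathbf{M}_{v,k-1} + \mathbf{M}_{v,k}$, with each term positive semidefinite. Hence it suffices to show $\mathbf{M}_w + \mathbf{M}_{v,k}\succ\zeros$.

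The key step uses a lower block-triangular structure. Write
\begin{equation*}
\mathbf{M}_{v,k} = \begin{bmatrix} \zeros & \zeros \\ \zeros & \mathbf{R}_{ii}(k)\end{bmatrix}, \qquad \mathbf{M}_w \succeq \mathbf{G}\,\mathbf{Q}_i(k-1)\,\mathbf{G}^\top, \qquad \mathbf{G} = \begin{bmatrix} -\mathbf{I}\\ \zeros\end{bmatrix}.
\end{equation*}
The bound on $\mathbf{M}_w$ holds because the contribution of $\mathbf{w}_i(k-1)$ lies only in the first block, since $p\neq i$ in the second block's sum, and the other $\mathbf{w}_p$ add further positive semidefinite terms. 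Therefore
\begin{equation*}
\mathbf{R}_{\Ical_i}(k) \succeq \diag\big(\mathbf{Q}_i(k-1), \mathbf{R}_{ii}(k)\big),
\end{equation*}
and the right-hand side is positive definite by Assumption~\ref{ass:Q_pd} together with $\mathbf{R}_{ii}(k)\succ\zeros$ from the model.

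The main obstacle is bookkeeping rather than any deep idea. We must check that the paper's implicit definition of $\mathbf{R}_{\Ical_i}(k)$ really collects all these fresh noise terms. We must also verify carefully that cross-covariances between the groups vanish. In particular, the correlation $\mathbf{R}_{ip}(k-1)$ among time-$(k-1)$ observation noises stays inside $\mathbf{M}_{v,k-1}$, which is itself a covariance and hence positive semidefinite, so that correlation does no harm. The case $k=1$, where time $k-1=0$ quantities play the role of initial conditions, should be handled by the same argument with the initial covariance absorbed into the $\mathbf{P}_{\Ical_i^2}$ term.
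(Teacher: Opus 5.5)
Your proposal is correct and follows the same route as the paper. Both arguments isolate the fresh noise pair $(-\mathbf{w}_i(k-1), \mathbf{v}_i(k))$, which enters only the first and second blocks of $\mathbf{e}_{\Ical_i}(k)$ respectively, to obtain $\mathbf{R}_{\Ical_i}(k) \succeq \diag(\mathbf{Q}_i(k-1), \mathbf{R}_{ii}(k)) \succ \zeros$. Your version also makes explicit the uncorrelatedness that the paper's one-line argument leaves implicit: process noises across agents, process versus observation noise, and noises across time. That uncorrelatedness is exactly what justifies discarding the remaining noise contributions.
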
\vspace{-0.8cm}
\begin{pf}
	Since $\tilde{\mathbf{x}}_i(k|k-1) = \mathbf{A}_i(k-1)\tilde{\mathbf{x}}_i(k-1|k-1) - \mathbf{w}_i(k-1)$, one can rewrite \eqref{eq:e_I_def} as
\begin{equation*}
	\mathbf{e}_{\Ical_i}(k) := \begin{bmatrix}
		\mathbf{A}_i(k)\tilde{\mathbf{x}}_i(k-1|k-1)\\ -\sum\nolimits_{p\in \Ical^o_i(k)\setminus \{i\}} \mathbf{C}_{ip}(k)(\mathbf{A}_p(k-1)\tilde{\mathbf{x}}_p(k-1|k-1) - \mathbf{w}_p(k))
	\end{bmatrix} + \begin{bmatrix}
		-\mathbf{w}_i(k-1)\\
		\mathbf{v}_i(k)
	\end{bmatrix}.
\end{equation*}
	Therefore, from the definition of $\mathbf{R}_{\Ical_i}(k)$ in \eqref{eq:Eee}, it follows that $\mathbf{R}_{\Ical_i}(k) \!\succeq \! \EV \!\left[[-\mathbf{w}_i(k\!-\!1)\!^\top \mathbf{v}_i(k)\!^\top ]\!^\top[-\mathbf{w}_i(k\!-\!1)\!^\top \mathbf{v}_i(k)\!^\top ]\right]$, i.e., $\mathbf{R}_{\Ical_i}(k) \succeq \diag(\mathbf{Q}_i(k\!-\!1),\mathbf{R}_{ii}(k))$. By hypothesis $\mathbf{R}_{ii}(k) \succ \zeros$ and, by Assumption~\ref{ass:Q_pd}, $\mathbf{Q}_i(k-1)\succ \zeros$, which immediately proves the result. \hfill$\square$
\end{pf}

\begin{mdframed}[style=callout]
	In this paper, we propose that, at time $k$, each computational unit $\Tcal_i$ maintains a bound on $\mathbf{P}_{\Ical_i}(k-1|k-2)$, which is denoted by $\mathbf{X}_{\Ical_i}(k-1|k-2) \succeq \mathbf{P}_{\Ical_i}(k-1|k-2)$. To guarantee that such a bound is available at the next instant, then at time $k$ a new bound $\mathbf{X}_{\Ical_i}(k|k-1) \succeq \mathbf{P}_{\Ical_i}(k|k-1)$ must also be computed during the design procedure of $\mathbf{K}_{\Ical_i}(k)$.
\end{mdframed}

\vspace{-.2cm}


At time $k$, the computational unit $\Tcal_i$ can gather the bounds $\mathbf{X}_{\Ical_p}(k-1|k-2) \succeq \mathbf{P}_{\Ical_p}(k-1|k-2)$ with $p\in \Ical_i^o(k)$ from its neighbors. Notice that each $\mathbf{P}_{\Ical_p}(k-1|k-2)$  with $p\in \Ical_i^o(k)$ is a principal submatrix of $\mathbf{P}_{\Ical^2_i}(k-1|k-2)$. Therefore, the computational unit $\Tcal_i$ has access to $|\Ical_i^o(k)|$ bounds on principal submatrices of $\mathbf{P}_{\Ical^2_i}(k-1|k-2)$, which allows us to express a set of admissible matrices $\mathbf{P}_{\Ical_i^2}(k-1|k-2)$ as 
\begin{equation}\label{eq:admissible_P}
	\Pcal_{\Ical^2_i}(k-1) := \left\{ \mathbf{P}_{\Ical_i^2} \succ \zeros : \mathbf{W}_{ip}(k-1) \mathbf{P}_{\Ical_i^2} \mathbf{W}_{ip}^\top (k-1)\preceq \mathbf{X}_{\Ical_p}(k-1|k-2)\; \forall p\in \Ical_i^o(k)  \right\},
\end{equation}
where matrices $\mathbf{W}_{ip}(k-1)$ for $ p\in \Ical_i^o(k) $ extract principal submatrices of $ \mathbf{P}_{\Ical_i^2}$ associated with $\mathbf{X}_{\Ical_p}(k-1|k-2)$, i.e.,  $\mathbf{W}_{ip}(k\!-\!1)$ is such that $\mathbf{W}_{ip}(k\!-\!1) \mathbf{P}_{\Ical_i^2}(k\!-\!1|k\!-\!2) \mathbf{W}_{ip}^\top (k\!-\!1) = \mathbf{P}_{\Ical_p}(k\!-\!1|k\!-\!2)$. 

\vspace{0.2cm}
\begin{mdframed}[style=callout]
	One concludes that $\EV[\mathbf{e}_{\Ical_i}(k)\mathbf{e}_{\Ical_i}^\top(k)]$ is given by \eqref{eq:Eee}, where $\mathbf{P}_{\Ical_i^2}(k-1|k-2)$ is not exactly known but is known to satisfy $\mathbf{P}_{\Ical_i^2}(k-1|k-2) \in 	\Pcal_{\Ical^2_i}(k-1)$ with information locally available in $\Tcal_i$.
\end{mdframed}
\vspace{-.1cm}

Given the proposed information structure, one can show that the set of admissible matrices $\mathbf{P}_{\Ical_i^2}(k-1|k-2)$ is bounded.
\vspace{-.3cm}

\begin{lemma}\label{lem:W_full_col_rank}
		For all $i =1, 2, \ldots, N$, if at time $k$ there exist bounds $\mathbf{X}_{\Ical_p}(k-1|k-2) \succeq \mathbf{P}_{\Ical_p}(k-1|k-2)$ with $p\in \Ical_i^o(k)$, then $\mathbf{W}_i := [\mathbf{W}^\top_{ip_1} \;\; \mathbf{W}^\top_{ip_2} \cdots ]^\top$ with $\Ical_i^o(k) = \{p_1,p_2,\ldots\}$ is full column rank and $	\Pcal_{\Ical^2_i}(k-1) $ is bounded, i.e., there is $\mathbf{M} \in S_{++}$ such that $\mathbf{M} \succeq \mathbf{P}_{\Ical_i^2}$ for all $\mathbf{P}_{\Ical_i^2} \in \Pcal_{\Ical^2_i}(k-1)$.
\end{lemma}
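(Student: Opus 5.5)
The proof has two parts: first show that the stacked selection matrix $\mathbf{W}_i$ has full column rank, then convert that rank property into a uniform bound on the admissible set.

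\textbf{Full column rank.} Recall that $\tilde{\mathbf{x}}_{\Ical_i^2}(k-1|k-2)$ stacks the blocks $\tilde{\mathbf{x}}_r(k-1|k-2)$ over all pairs $(p,r)$ with $r\in\Ical_p^o(k-1)$ and $p\in\Ical_i^o(k)$. Each $\mathbf{W}_{ip}(k-1)$ is a block selection matrix. Its rows are rows of an identity matrix, and it picks the coordinates of $\tilde{\mathbf{x}}_{\Ical_i^2}$ that correspond to $\tilde{\mathbf{x}}_{\Ical_p}(k-1|k-2)$, i.e., the blocks indexed by $r\in\Ical_p^o(k-1)$ for that particular $p$. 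Every coordinate of $\tilde{\mathbf{x}}_{\Ical_i^2}$ belongs to some block $r\in\Ical_p^o(k-1)$ for some $p\in\Ical_i^o(k)$. So every coordinate is selected by at least one $\mathbf{W}_{ip}$.

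It follows that every standard basis row appears among the rows of $\mathbf{W}_i$. Equivalently, there is a permutation/selection matrix $\mathbf{S}$ with $\mathbf{S}\mathbf{W}_i=\mathbf{I}$. Hence $\mathbf{W}_i\mathbf{v}=\zeros$ implies $\mathbf{v}=\zeros$, which is full column rank. I will state this carefully so that it also covers the case where the same $r$ appears for several $p$; this only duplicates rows and does not affect the rank.

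\textbf{Boundedness.} Take any $\mathbf{P}\in\Pcal_{\Ical_i^2}(k-1)$. Stacking the constraints gives
\begin{equation*}
\mathbf{W}_i\mathbf{P}\mathbf{W}_i^\top \preceq \mathbf{L} \quad\text{in the sense of diagonal blocks, where}\quad \mathbf{L}:=\diag\big(\mathbf{X}_{\Ical_p}(k-1|k-2);\,p\in\Ical_i^o(k)\big).
\end{equation*}
To turn the blockwise constraints into a single inequality, I use that for a positive semidefinite block matrix $\mathbf{Z}$ with $m$ diagonal blocks $\mathbf{Z}_{pp}$ one has $\mathbf{Z}\preceq m\,\diag(\mathbf{Z}_{pp})$. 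This follows from convexity: $\big(\sum_p \mathbf{a}_p\big)^\top(\cdot)\big(\sum_p \mathbf{a}_p\big)\le m\sum_p \mathbf{a}_p^\top(\cdot)\,\mathbf{a}_p$, which is Cauchy–Schwarz applied to the Gram form. Applying this with $\mathbf{Z}=\mathbf{W}_i\mathbf{P}\mathbf{W}_i^\top$ gives
\begin{equation*}
\mathbf{W}_i\mathbf{P}\mathbf{W}_i^\top\preceq |\Ical_i^o(k)|\,\mathbf{L}.
\end{equation*}
Next I multiply on the left by $\mathbf{W}_i^+$ and on the right by its transpose. Since $\mathbf{W}_i$ has full column rank, $\mathbf{W}_i^+\mathbf{W}_i=\mathbf{I}$. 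Congruence preserves the Loewner order, so
\begin{equation*}
\mathbf{P}\preceq |\Ical_i^o(k)|\,\mathbf{W}_i^+\mathbf{L}(\mathbf{W}_i^+)^\top.
\end{equation*}
Finally, set $\mathbf{M}:=|\Ical_i^o(k)|\,\mathbf{W}_i^+\mathbf{L}(\mathbf{W}_i^+)^\top+\mathbf{I}$. This matrix is positive definite and independent of $\mathbf{P}$, so it bounds the whole admissible set.

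\textbf{Main obstacle.} The rank argument is mostly bookkeeping about the indexing of $\tilde{\mathbf{x}}_{\Ical_i^2}$. The step that needs care is the passage from the blockwise diagonal constraints to one Loewner bound. The factor $m=|\Ical_i^o(k)|$ from the block-diagonal dominance lemma handles this, and I will prove that lemma inline by the Cauchy–Schwarz expansion.
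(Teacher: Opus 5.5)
Your proof is correct. The rank part is the same as the paper's: each row of $\mathbf{W}_{ip}(k-1)$ is a standard basis row, and every coordinate of $\tilde{\mathbf{x}}_{\Ical_i^2}(k-1|k-2)$ is selected by at least one $\mathbf{W}_{ip}(k-1)$. Hence a row permutation of $\mathbf{W}_i$ has the form $[\mathbf{I}\;\;\tilde{\mathbf{W}}_i^\top]^\top$, which is your $\mathbf{S}\mathbf{W}_i=\mathbf{I}$.

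You genuinely differ on boundedness. The paper does not argue it in this proof; it only invokes \citet[Lemma~2]{PedrosoBatistaEtAl2026OCI}, which turns full column rank of $\mathbf{W}_i$ into boundedness of $\Pcal_{\Ical^2_i}(k-1)$. You instead give a self-contained proof with an explicit $\mathbf{M}$. You combine the block-dominance inequality $\mathbf{Z}\preceq m\,\diag(\mathbf{Z}_{pp})$ for $\mathbf{Z}\succeq\zeros$, with $m:=|\Ical_i^o(k)|$, and a congruence by a left inverse of $\mathbf{W}_i$; your Cauchy--Schwarz justification of the inequality is right, and you could equally use $\mathbf{S}$ in place of $\mathbf{W}_i^+$.

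The citation keeps the paper short and ties the lemma to the OCI machinery used afterwards. Your argument is elementary and shows that the bound depends only on locally available data and on $m\le M$. It also does not need the $\mathbf{X}_{\Ical_p}(k-1|k-2)$ to be invertible.

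When they are invertible, the paper's own tools give a sharper explicit bound in one line. Taking uniform weights $\bomega_p=1/m$ in Lemma~\ref{lem:KF_conservative} gives $\mathbf{P}^{-1}\succeq \frac{1}{m}\mathbf{W}_i^\top\mathbf{L}^{-1}\mathbf{W}_i\succ\zeros$, i.e., $\mathbf{P}\preceq m(\mathbf{W}_i^\top\mathbf{L}^{-1}\mathbf{W}_i)^{-1}$. Since $(\mathbf{W}_i^\top\mathbf{L}^{-1}\mathbf{W}_i)^{-1}\preceq\mathbf{G}\mathbf{L}\mathbf{G}^\top$ for every left inverse $\mathbf{G}$ of $\mathbf{W}_i$ (Gauss--Markov), this bound is never looser than yours.
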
\vspace{-.6cm}
\begin{pf}
	Notice that every row of $\mathbf{W}_{ip}$ is composed of zeros except for one and only one entry, which is unitary. The column where the one is placed is associated with the entry of $\tilde{\mathbf{x}}_{\Ical^2_i}(k-1|k-2)$ that such row extracts. Since $\mathbf{W}_{ip}$ extracts information about $\tilde{\mathbf{x}}_{r}(k-1|k-2)$ for all $r\in \Ical_i^o(k-1)$ and all $p\in \Ical_i^o(k)$, it follows that every column of $\mathbf{W}_i$ has at least one one. As a result, there is a permutation of rows of $\mathbf{W}_i$ that yields a matrix of the form $[\mathbf{I}\;\; \tilde{\mathbf{W}}_i^\top]^\top$, showing that $\mathbf{W}_i$  is full column rank. The fact that $\mathbf{P}_{\Ical_i^2}$ is bounded follows from \citet[Lemma~2]{PedrosoBatistaEtAl2026OCI} since $\mathbf{W}_i$  is full column rank. \hfill$\square$
\end{pf}
\vspace{-.4cm}

\subsection{Overlapping Covariance Intersection Problem}\label{sec:OCI} \vspace{-0.2cm}

One can now incorporate the partial knowledge of $\EV[\mathbf{e}_{\Ical_i}(k)\mathbf{e}_{\Ical_i}^\top(k)]$ analyzed in Section~\ref{sec:partial_info} with the formulation of the design problem as an optimal linear filter in Section~\ref{sec:linear_estimator}. The goal is now to design a gain $\mathbf{K}_{\Ical_i}(k)$ that satisfies $\mathbf{K}_{\Ical_i}(k) \mathbf{H}_{\Ical_i}(k) = \mathbf{I}$ and that minimizes a performance metric on the worst-case bound $\mathbf{X}_{ii}(k|k) \succeq \mathbf{P}_{ii}(k|k)$ under the available information. This can be written as an OCI optimization problem \citep{PedrosoBatistaEtAl2026OCI} \vspace{-0.1cm}
\begin{equation}\label{eq:OCI_orig_prob}
	\begin{aligned}
		&\min_{\substack{\mathbf{K}_{\Ical_i}(k)\in \R^{{n_i}\times o_i}\\\mathbf{X}_{ii}(k|k)\in S^{n_i}_+}}  && J_i(\mathbf{X}_{ii}(k|k))\\
		&\quad\quad\; \mathrm{s.t.} &&  \mathbf{K}_{\Ical_i}(k) \mathbf{H}_{\Ical_i}(k) = \mathbf{I}\\  
		&&&	\mathbf{X}_{ii}(k|k) \succeq  \mathbf{K}_{\Ical_i}(k)(\mathbf{R}_{\Ical_i}(k) + \mathbf{C}_{\Ical_i}(k)\mathbf{P}\mathbf{C}_{\Ical_i}^\top(k)) \mathbf{K}_{\Ical_i}^\top(k),\; \forall \mathbf{P} \in \Pcal_{\Ical^2_i}(k-1).
	\end{aligned}\vspace{-0.1cm}
\end{equation} 
Additionally, the computational unit $\Tcal_i$ must also compute a new bound on $\mathbf{X}_{\Ical_i}(k|k-1)$, which is required for the next iteration of the filter. One can do so resorting to \eqref{eq:P_I_2} and adding a regularization term to \eqref{eq:OCI_orig_prob}, which yields \vspace{-0.1cm}
\begin{equation}\label{eq:OCI_w_reg}
	\begin{aligned}
		&\min_{\substack{\mathbf{K}_{\Ical_i}(k), \mathbf{X}_{ii}(k|k) \succeq \zeros \\ \mathbf{X}_{\Ical_i}(k|k-1) \succeq \zeros}}  && J_i(\mathbf{X}_{ii}(k|k)) + \gamma_i G_i(\mathbf{X}_{\Ical_i}(k|k-1))\\
		&\quad\quad\; \mathrm{s.t.} &&  \mathbf{K}_{\Ical_i}(k) \mathbf{H}_{\Ical_i}(k) = \mathbf{I}\\  
		&&&	\mathbf{X}_{ii}(k|k) \succeq  \mathbf{K}_{\Ical_i}(k)(\mathbf{R}_{\Ical_i}(k) + \mathbf{C}_{\Ical_i}(k)\mathbf{P}\mathbf{C}_{\Ical_i}^\top(k)) \mathbf{K}_{\Ical_i}^\top(k),\; \forall \mathbf{P} \in\Pcal_{\Ical^2_i}(k-1)\\
	&&&	\mathbf{X}_{\Ical_i}(k|k-1) \succeq \mathbf{D}_{\Ical_i}(k-1) \mathbf{P}\mathbf{D}^\top_{\Ical_i}(k-1)+ \mathbf{E}_{\Ical_i}(k-1),  \forall \mathbf{P} \in\Pcal_{\Ical^2_i}(k-1),
	\end{aligned}\vspace{-0.1cm}
\end{equation}
where $\gamma_i >0$ is a small regularization weight and $G_i$ is any performance metric that satisfies Assumption~\ref{ass:J}. It is worth pointing out immediately that \eqref{eq:OCI_w_reg} is not easy or efficient to solve numerically with off-the-shelf solvers. In the following section, a computationally efficient solution to \eqref{eq:OCI_w_reg}  is studied resorting to the literature on the OCI problem \citep{PedrosoBatistaEtAl2026OCI}.

The local design of the gains resorting to \eqref{eq:OCI_w_reg} is distributed and can be computed recursively. We introduce an assumption on the initial information that must be known to solve \eqref{eq:OCI_w_reg} recursively.

\begin{assumption}\label{ass:initial_cond}
	Every computational unit $\Tcal_i$ with $i = 1,2,\ldots,N$ has access to an initial bound $\mathbf{X}_{\Ical_i}(1|0) \succeq \mathbf{P}_{\Ical_i}(1|0)$. \hfill$\triangle$
\end{assumption}

\begin{remark}
	Notice that \eqref{eq:OCI_w_reg} can only be solved starting at $k = 2$. Indeed, at $k = 1$ all the information for a standard filter design is available (e.g., each computational unit has all the information to compute a local minimum variance gain \eqref{eq:optimal_K}). Since any choice for the gain at $k = 1$ allows to recursively employ the design procedure in \eqref{eq:OCI_w_reg} starting at $k =2$, we disregard the details of the synthesis at $k = 1$. \hfill$\triangle$
\end{remark}

\section{Distributed Synthesis Algorithm}\label{sec:synth_alg}

In this section, we study the recursive implementation of the distributed synthesis solution that was cast as an OCI problem in Section~\ref{sec:distributed_synthesis}. In Section~\ref{sec:eff_synth}, we focus on the numerical computation of the local synthesis problem, by providing a solution that can be efficiently solved as a semidefinite program (SDP). In Section~\ref{sec:feasible_recursive}, we establish control-theoretic properties of the recursive implementation of the filter, namely feasibility and consistency. In Section~\ref{sec:alg}, we outline the algorithm of the distributed synthesis procedure from the point of view of a computational unit, showing that the ultra large-scale requirements are met.

\subsection{Efficient Distributed Synthesis}\label{sec:eff_synth} \vspace{-0.2cm}

In the recursive implementation of the distributed filtering approach presented in Section~\ref{sec:distributed_synthesis}, the optimization problem \eqref{eq:OCI_w_reg} must be numerically solved in real-time in the computational unit $\Tcal_i$ for every instant $k \geq 2$. As a result, the viability of the proposed solution depends heavily on the computational efficiency of solving \eqref{eq:OCI_w_reg}. Unfortunately, as already mentioned, the optimization problem \eqref{eq:OCI_w_reg} is nonlinear and cannot be solved resorting to off-the-shelf solvers due to the form of the second and third constraints. 

The OCI problem has been recently studied in \citet{PedrosoBatistaEtAl2026OCI}. Therein, a solution approach to the OCI problem is proposed, which amounts to replacing the information constraints that characterize $\Pcal_{\Ical^2_i}(k-1)$ with a parameterized set of conservative bounds. The new parameterized set of admissible matrices $\mathbf{P}_{\Ical_i^2}(k-1|k-2)$ is called a \emph{family of bounds} and a solution of \eqref{eq:OCI_w_reg} when restricted to that family is called a \emph{family-optimal solution}. In what follows, we characterize a meaningful family of bounds, whose family-optimal solution allows for an efficient numerical computation.

First, it is beneficial to provide an alternative characterization of $\Pcal_{\Ical^2_i}(k-1)$ as shown in the following lemma.

\begin{lemma}\label{lem:inverse_bounds}
	The set $\Pcal_{\Ical^2_i}(k-1)$ of admissible matrices $\mathbf{P}_{\Ical_i^2}(k-1|k-2)$ defined in \eqref{eq:admissible_P} can be equivalently expressed as $\Pcal_{\Ical^2_i}(k-1)  := \{\mathbf{P}_{\Ical_i^2} \succ \zeros : \mathbf{P}_{\Ical_i^2}^{-1} \!\succeq \mathbf{Y}^i_{\Ical_p}(k-1|k-2) \;\forall p \in \Ical_i^o(k)\}$, where $\mathbf{Y}^i_{\Ical_p}(k-1|k-2) := \mathbf{W}_{ip}^\top (k-1) \mathbf{X}_{\Ical_p}(k-1|k-2)^{-1} \mathbf{W}_{ip}(k-1)$ for $p \in \Ical_i^o(k)$.  
\end{lemma}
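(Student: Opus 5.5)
The plan is to prove that, for a fixed $p\in\Ical_i^o(k)$ and any $\mathbf{P}\succ\zeros$, the two conditions
\begin{equation*}
\mathbf{W}_{ip}\mathbf{P}\mathbf{W}_{ip}^\top \preceq \mathbf{X}_{\Ical_p}
\quad\text{and}\quad
\mathbf{P}^{-1}\succeq \mathbf{W}_{ip}^\top \mathbf{X}_{\Ical_p}^{-1}\mathbf{W}_{ip}
\end{equation*}
are equivalent. Here time arguments are dropped, and $\mathbf{X}_{\Ical_p}\succ\zeros$ so that its inverse exists, as implicitly required by the definition of $\mathbf{Y}^i_{\Ical_p}$. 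Taking the intersection over $p\in\Ical_i^o(k)$ then gives the claimed set equality, since both descriptions also require $\mathbf{P}\succ\zeros$.

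The equivalence follows from Schur complements. First, since $\mathbf{X}_{\Ical_p}\succ\zeros$, the inequality $\mathbf{X}_{\Ical_p}-\mathbf{W}_{ip}\mathbf{P}\mathbf{W}_{ip}^\top\succeq\zeros$ is the Schur complement of the block $\mathbf{P}^{-1}$ in
\begin{equation*}
\mathbf{M}:=\begin{bmatrix}\mathbf{P}^{-1} & \mathbf{W}_{ip}^\top\\ \mathbf{W}_{ip} & \mathbf{X}_{\Ical_p}\end{bmatrix},
\end{equation*}
because $\mathbf{X}_{\Ical_p}-\mathbf{W}_{ip}(\mathbf{P}^{-1})^{-1}\mathbf{W}_{ip}^\top = \mathbf{X}_{\Ical_p}-\mathbf{W}_{ip}\mathbf{P}\mathbf{W}_{ip}^\top$. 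Since $\mathbf{P}^{-1}\succ\zeros$, the Schur complement lemma says this complement is positive semidefinite if and only if $\mathbf{M}\succeq\zeros$. Second, since $\mathbf{X}_{\Ical_p}\succ\zeros$, the same lemma applied to the other diagonal block says $\mathbf{M}\succeq\zeros$ if and only if $\mathbf{P}^{-1}-\mathbf{W}_{ip}^\top\mathbf{X}_{\Ical_p}^{-1}\mathbf{W}_{ip}\succeq\zeros$. That second condition is exactly $\mathbf{P}^{-1}\succeq\mathbf{Y}^i_{\Ical_p}$, and chaining the two equivalences completes the argument.

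The only delicate point is positive definiteness of $\mathbf{X}_{\Ical_p}(k-1|k-2)$. If it is merely semidefinite, the inverse in $\mathbf{Y}^i_{\Ical_p}$ is undefined and the lemma is ill-posed. I would therefore state explicitly that $\mathbf{X}_{\Ical_p}\succ\zeros$. This can be justified in two ways: it is needed for $\mathbf{Y}^i_{\Ical_p}$ to be well-defined, and by Assumption~\ref{ass:Q_pd} the process noise enters $\mathbf{E}_{\Ical_i}$, so any feasible bound from the third constraint of \eqref{eq:OCI_w_reg} is positive definite. Neither Schur step uses that $\mathbf{W}_{ip}$ is a selection matrix or has any particular rank, so the equivalence holds for each $p$ separately.
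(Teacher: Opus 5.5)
Your proof is correct. The paper gives no argument of its own here. It simply invokes Lemma~1 of the overlapping covariance intersection reference, which is exactly the per-$p$ equivalence you establish: for $\mathbf{P}\succ\zeros$ and $\mathbf{X}_{\Ical_p}\succ\zeros$, $\mathbf{W}_{ip}\mathbf{P}\mathbf{W}_{ip}^\top\preceq\mathbf{X}_{\Ical_p}$ holds if and only if $\mathbf{P}^{-1}\succeq\mathbf{W}_{ip}^\top\mathbf{X}_{\Ical_p}^{-1}\mathbf{W}_{ip}$. The set equality then follows by intersecting over $p\in\Ical_i^o(k)$. Your two Schur complements of the block matrix $\mathbf{M}$ give a self-contained derivation of that cited fact. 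The citation buys brevity. Your version shows that no structure of $\mathbf{W}_{ip}$ (selection pattern or rank) is used, and that $\mathbf{X}_{\Ical_p}\succ\zeros$ is the only hypothesis needed; the paper leaves that hypothesis implicit by writing $\mathbf{X}_{\Ical_p}(k-1|k-2)^{-1}$. One small completion of your justification of that hypothesis: the third constraint of \eqref{eq:OCI_w_reg} only covers bounds computed at $k\geq 2$. For the initial bounds of Assumption~\ref{ass:initial_cond}, note that $\mathbf{X}_{\Ical_p}(1|0)\succeq\mathbf{P}_{\Ical_p}(1|0)\succ\zeros$. This holds because each $\tilde{\mathbf{x}}_r(1|0)$ contains the process noise $-\mathbf{w}_r(0)$, which is uncorrelated across agents (cf.\ \eqref{eq:propagation_P_pred}) and has $\mathbf{Q}_r(0)\succ\zeros$ by Assumption~\ref{ass:Q_pd}.
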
\vspace{-0.5cm}
\begin{pf}
	The result follows from the direct application of \citet[Lemma~1]{PedrosoBatistaEtAl2026OCI}. \hfill$\square$
\end{pf}
\vspace{-0.2cm}

Second, we introduce the so-called \emph{Kahan} family of bounds, which is studied in \citet{Kahan1968} and is a generalization of the most common family of bounding ellipsoids used in standard CI problems (e.g., \citet{ReinhardtNoackEtAl2015}). Specifically, the Kahan family of bounds is denoted by $\Pcal^{\mathrm{KF}}_{\Ical^2_i}(k-1)$ and it is parameterized by a vector $\boldsymbol{\omega} \in  \Delta^{|\Ical_i^o(k)|} := \{\boldsymbol{\omega} \in \Rnn^{|\Ical_i^o(k)|} : \ones^\top \boldsymbol{\omega} = 1 \}$ as
\begin{equation}\label{eq:admissible_P_KF}
	\Pcal^{\mathrm{KF}}_{\Ical^2_i}(\bomega,k-1) = \left\{\mathbf{P}_{\Ical_i^2} \succ \zeros : \mathbf{P}_{\Ical_i^2}^{-1} \!\succeq \sum\nolimits_{p\in \Ical_i^o(k)} \bomega_p\mathbf{Y}^i_{\Ical_p}(k-1|k-2)\right\},
\end{equation}
where, for the sake of conciseness and by abuse of notation, each entry of $\bomega$ is associated with one element of $\Ical_i^o(k)$ and the entry associated with $p \in \Ical_i^o(k)$ is denoted by $\bomega_p$. The following lemma establishes that $\Pcal^{\mathrm{KF}}_{\Ical^2_i}(\bomega,k-1)$ is a conservative approximation of $\Pcal_{\Ical^2_i}(k-1)$. For more details on the conservativeness and properties of the Kahan family see \citet{Kahan1968} and \citet[Section~III.C]{PedrosoBatistaEtAl2026OCI}.

\begin{lemma}\label{lem:KF_conservative}
	The Kahan family is conservative, i.e., $\Pcal_{\Ical^2_i}(k-1) \subseteq \Pcal^{\mathrm{KF}}_{\Ical^2_i}(\bomega,k-1)$ for all  $\boldsymbol{\omega} \in  \Delta^{|\Ical_i^o(k)|}$.
\end{lemma}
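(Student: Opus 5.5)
The plan is to work with the inverse characterization of $\Pcal_{\Ical^2_i}(k-1)$ given in Lemma~\ref{lem:inverse_bounds} rather than the original definition \eqref{eq:admissible_P}. In that form, both $\Pcal_{\Ical^2_i}(k-1)$ and $\Pcal^{\mathrm{KF}}_{\Ical^2_i}(\bomega,k-1)$ are described by linear matrix inequalities on the same object $\mathbf{P}_{\Ical_i^2}^{-1}$. The containment then reduces to a convex-combination argument in the Loewner order.

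Concretely, fix an arbitrary $\bomega\in\Delta^{|\Ical_i^o(k)|}$ and an arbitrary $\mathbf{P}_{\Ical_i^2}\in\Pcal_{\Ical^2_i}(k-1)$. Lemma~\ref{lem:inverse_bounds} gives $\mathbf{P}_{\Ical_i^2}\succ\zeros$ and
\begin{equation*}
\mathbf{P}_{\Ical_i^2}^{-1}\succeq \mathbf{Y}^i_{\Ical_p}(k-1|k-2)\quad \text{for every } p\in\Ical_i^o(k).
\end{equation*}
Since each $\bomega_p\geq 0$, multiplying the $p$-th inequality by $\bomega_p$ preserves it: $\bomega_p\mathbf{P}_{\Ical_i^2}^{-1}\succeq\bomega_p\mathbf{Y}^i_{\Ical_p}(k-1|k-2)$. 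This holds because the positive semidefinite cone is closed under nonnegative scaling.

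Summing these inequalities over $p\in\Ical_i^o(k)$ is allowed because the cone is also closed under addition. Using $\ones^\top\bomega=1$, the left-hand side collapses to $\mathbf{P}_{\Ical_i^2}^{-1}$. The result is $\mathbf{P}_{\Ical_i^2}^{-1}\succeq\sum_{p}\bomega_p\mathbf{Y}^i_{\Ical_p}(k-1|k-2)$. Together with $\mathbf{P}_{\Ical_i^2}\succ\zeros$, this is exactly membership in $\Pcal^{\mathrm{KF}}_{\Ical^2_i}(\bomega,k-1)$ as defined in \eqref{eq:admissible_P_KF}. Since $\bomega$ and $\mathbf{P}_{\Ical_i^2}$ were arbitrary, the inclusion follows.

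The only step that carries real content is the equivalence in Lemma~\ref{lem:inverse_bounds}. That equivalence converts the principal-submatrix bounds $\mathbf{W}_{ip}\mathbf{P}\mathbf{W}_{ip}^\top\preceq\mathbf{X}_{\Ical_p}$ into the lifted inverse bounds $\mathbf{P}^{-1}\succeq\mathbf{W}_{ip}^\top\mathbf{X}_{\Ical_p}^{-1}\mathbf{W}_{ip}$, and it is already available from the earlier lemma. I therefore expect no genuine obstacle. The one point to check is that the relevant inverses exist. $\mathbf{P}_{\Ical_i^2}$ is invertible because the set requires $\mathbf{P}_{\Ical_i^2}\succ\zeros$. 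The matrices $\mathbf{X}_{\Ical_p}(k-1|k-2)$ must be invertible for $\mathbf{Y}^i_{\Ical_p}$ to be well defined, which is implicitly assumed in Lemma~\ref{lem:inverse_bounds}.
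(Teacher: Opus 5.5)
Your proposal is correct and follows the same approach as the paper's proof. The paper likewise invokes Lemma~\ref{lem:inverse_bounds}, writes $\mathbf{P}^{-1}=\sum_p\bomega_p\mathbf{P}^{-1}\succeq\sum_p\bomega_p\mathbf{Y}^i_{\Ical_p}(k-1|k-2)$ using $\bomega\in\Delta^{|\Ical_i^o(k)|}$, and concludes membership in $\Pcal^{\mathrm{KF}}_{\Ical^2_i}(\bomega,k-1)$. Your remark that $\mathbf{X}_{\Ical_p}(k-1|k-2)$ must be invertible is a fair observation about an assumption the paper leaves implicit.
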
\vspace{-1cm}
\begin{pf}
	The result amounts to showing that $\mathbf{P} \in \Pcal_{\Ical^2_i}(k-1)$ implies $\mathbf{P}\in \Pcal^{\mathrm{KF}}_{\Ical^2_i}(\bomega,k-1)$ for all $\boldsymbol{\omega} \in  \Delta^{|\Ical_i^o(k)|}$. Take any $\mathbf{P} \in \Pcal_{\Ical^2_i}(k-1)$. By Lemma~\ref{lem:inverse_bounds}, $\mathbf{P}^{-1} \succeq \mathbf{Y}^i_{\Ical_p}(k-1|k-2)$ for all $p \in \Ical_i^o(k)$. As a result, for all $\boldsymbol{\omega} \in  \Delta^{|\Ical_i^o(k)|}$ one can write $\mathbf{P}^{-1} = \sum\nolimits_{p\in \Ical_i^o(k)} \bomega_p\mathbf{P}^{-1}  \succeq \sum\nolimits_{p\in \Ical_i^o(k)} \bomega_p\mathbf{Y}^i_{\Ical_p}(k-1|k-2)$. Therefore, $\mathbf{P} \in \Pcal^{\mathrm{KF}}_{\Ical^2_i}(\bomega,k-1)$ according to the definition in  \eqref{eq:admissible_P_KF}. \hfill$\square$
\end{pf}
\vspace{-0.4cm}

The Kahan-family-optimal solution of the design problem of computational unit $\Tcal_i$ at time $k$ in \eqref{eq:OCI_w_reg} is then said to be the solution to
\begin{equation}\label{eq:OCI_KF}
	\begin{aligned}
		&\min_{\substack{\mathbf{K}_{\Ical_i}(k), \mathbf{X}_{ii}(k|k) \succeq \zeros\\ \mathbf{X}_{\Ical_i}(k|k-1) \succeq \zeros, \boldsymbol{\omega} \in  \Delta^{|\Ical_i^o(k)|}}}  && J_i(\mathbf{X}_{ii}(k|k)) + \gamma_i G_i(\mathbf{X}_{\Ical_i}(k|k-1))\\
		&\quad\quad \quad\quad\;\;\,\mathrm{s.t.} &&  \mathbf{K}_{\Ical_i}(k) \mathbf{H}_{\Ical_i}(k) = \mathbf{I}\\  
		&&&	\mathbf{X}_{ii}(k|k) \succeq  \mathbf{K}_{\Ical_i}(k)(\mathbf{R}_{\Ical_i}(k) + \mathbf{C}_{\Ical_i}(k)\mathbf{P}\mathbf{C}_{\Ical_i}^\top(k)) \mathbf{K}_{\Ical_i}^\top(k),\; \forall \mathbf{P} \in 	\Pcal^{\mathrm{KF}}_{\Ical^2_i}(\bomega,k-1)\\
		&&&	\mathbf{X}_{\Ical_i}(k|k-1) \succeq \mathbf{D}_{\Ical_i}(k-1) \mathbf{P}\mathbf{D}^\top_{\Ical_i}(k-1)+ \mathbf{E}_{\Ical_i}(k-1),   \forall \mathbf{P}	\in \Pcal^{\mathrm{KF}}_{\Ical^2_i}(\bomega,k-1)
	\end{aligned}
\end{equation}
where $|\Ical_i^o(k)|-1$ more degrees of freedom are added w.r.t.\ \eqref{eq:OCI_w_reg} to parameterize the Kahan family.
Crucially, using a result from \citet{PedrosoBatistaEtAl2026OCI}, one can show that the Kahan-family-optimal solution of  \eqref{eq:OCI_w_reg} can be efficiently computed as an SDP, as shown in the following result.

\begin{thm}\label{th:KF_sol}
	Under Assumptions~\ref{ass:J} and~\ref{ass:Q_pd}, the triple $(\mathbf{K}_{\Ical_i}(k), \mathbf{X}_{ii}(k|k), \mathbf{X}_{\Ical_i}(k|k-1))$ is a Kahan-family-optimal solution to the design problem of computational unit $\Tcal_i$ at time $k$ in \eqref{eq:OCI_w_reg}, where $(\mathbf{U},\mathbf{Y},\mathbf{X}_{ii}(k|k),\mathbf{X}_{\Ical_i}(k|k-1), \boldsymbol{\omega})\in$
	\begin{equation}\label{eq:KF_SDP}
		  \begin{aligned}
		 	 \!\!\!\!\!\!\!\!\!\!& \underset{\substack{\mathbf{U}  \succeq \zeros,\mathbf{Y}  \succeq \zeros,  \mathbf{X}_{ii}(k|k) \succeq \zeros \\\mathbf{X}_{\Ical_i}(k|k-1) \succeq \zeros, \boldsymbol{\omega} \in  \Delta^{|\Ical_i^o(k)|}}}{\argmin} &\quad \!\!\!\!\!\!\!\!\!\!\!\!\!\!\!\!\!\!\!\!\!\!& \!\!\!\!\!\! J_i(\mathbf{X}_{ii}(k|k)) + \gamma_i G_i(\mathbf{X}_{\Ical_i}(k|k-1))\\
		 	& \quad\quad \qquad\;\,\mathrm{s.t.} && \!\!\!\!\!\!\!\!\begin{bmatrix} \mathbf{X}_{ii}(k|k) &  \mathbf{I} \\ \mathbf{I} & \mathbf{H}^\top_{\Ical_i}(k) \mathbf{R}_{\Ical_i}(k)^{-1}\mathbf{H}_{\Ical_i}(k)-\mathbf{U}\end{bmatrix}  \succeq \zeros\\
		 	&  &&\!\!\!\!\!\! \!\!\begin{bmatrix} \mathbf{U}&  \mathbf{H}^\top_{\Ical_i}(k)\mathbf{R}_{\Ical_i}(k)^{-1}\mathbf{C}_{\Ical_i}(k)\\ (\mathbf{H}^\top_{\Ical_i}(k)\mathbf{R}_{\Ical_i}(k)^{-1}\mathbf{C}_{\Ical_i}(k) )^\top & \mathbf{Y} +\mathbf{C}^\top_{\Ical_i}(k) \mathbf{R}_{\Ical_i}(k)^{-1}\mathbf{C}_{\Ical_i}(k)\end{bmatrix} \! \succeq \!\zeros\\
		 	&&&\!\!\!\!\!\! \!\!\begin{bmatrix} \mathbf{X}_{\Ical_i}(k|k-1)-\mathbf{E}_{\Ical_i}(k-1) &  \mathbf{D}_{\Ical_i}(k-1)\\ \mathbf{D}^\top_{\Ical_i}(k-1) & \mathbf{Y}\end{bmatrix}  \succeq \zeros\\
		 	&&&\!\!\!\!\!\! \mathbf{Y} =  \sum\nolimits_{p\in \Ical_i^o(k)} \bomega_p\mathbf{Y}^i_{\Ical_p}(k-1|k-2)
		 \end{aligned} \vspace{-0.4cm}
	\end{equation}
and  \vspace{-0.4cm}
	\begin{equation}\label{eq:KF_K}
	\begin{split}
	\mathbf{K}_{\Ical_i}(k) := &\left(\mathbf{H}^\top_{\Ical_i}(k)\mathbf{R}^{-1}_{\Ical_i}(k)\left(\mathbf{R}_{\Ical_i}(k)-\mathbf{C}_{\Ical_i}(k)\left( \mathbf{Y}+\mathbf{C}^\top_{\Ical_i}(k) \mathbf{R}^{-1}_{\Ical_i}(k)\mathbf{C}_{\Ical_i}(k)\right)^{+} \mathbf{C}^\top_{\Ical_i}(k)\right)\mathbf{R}^{-1}_{\Ical_i}(k)\mathbf{H}_{\Ical_i}(k)\right)^{-1} \\
	& \quad \! \mathbf{H}^\top_{\Ical_i}(k)\mathbf{R}^{-1}_{\Ical_i}(k)\left(\mathbf{R}_{\Ical_i}(k)-\mathbf{C}_{\Ical_i}(k)\left( \mathbf{Y}+\mathbf{C}^\top_{\Ical_i}(k) \mathbf{R}^{-1}_{\Ical_i}(k)\mathbf{C}_{\Ical_i}(k)\right)^{+} \mathbf{C}^\top_{\Ical_i}(k)\right)\mathbf{R}^{-1}_{\Ical_i}(k).
	\end{split}
\end{equation}
\end{thm}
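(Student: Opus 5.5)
The approach is to fix $\bomega\in\Delta^{|\Ical_i^o(k)|}$ and set $\mathbf{Y}:=\sum_{p}\bomega_p\mathbf{Y}^i_{\Ical_p}(k-1|k-2)$. Each semi-infinite constraint in \eqref{eq:OCI_KF} is then replaced by a finite LMI, and finally $\bomega$ is optimized jointly.

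\textbf{Step 1: the worst case over the Kahan family.} The set $\Pcal^{\mathrm{KF}}_{\Ical^2_i}(\bomega,k-1)$ consists of all $\mathbf{P}\succ\zeros$ with $\mathbf{P}^{-1}\succeq\mathbf{Y}$. The maps $\mathbf{P}\mapsto\mathbf{K}(\mathbf{R}+\mathbf{C}\mathbf{P}\mathbf{C}^\top)\mathbf{K}^\top$ and $\mathbf{P}\mapsto\mathbf{D}\mathbf{P}\mathbf{D}^\top+\mathbf{E}$ are monotone in the Loewner order, so the universally quantified constraints reduce to the "largest" element $\mathbf{P}=\mathbf{Y}^{-1}$. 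If $\mathbf{Y}$ is singular, which can happen if some $\bomega_p=0$, we take the limit $(\mathbf{Y}+\epsilon\mathbf{I})^{-1}$. The third constraint then reads
\[
\mathbf{X}_{\Ical_i}(k|k-1)-\mathbf{E}\succeq\mathbf{D}\mathbf{P}\mathbf{D}^\top\quad\text{for all }\mathbf{P}\text{ with }\mathbf{P}^{-1}\succeq\mathbf{Y}.
\]
By a Schur-complement argument this is equivalent to the third LMI in \eqref{eq:KF_SDP}, $\begin{bmatrix}\mathbf{X}_{\Ical_i}-\mathbf{E} & \mathbf{D}\\ \mathbf{D}^\top & \mathbf{Y}\end{bmatrix}\succeq\zeros$. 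The generalized Schur complement handles singular $\mathbf{Y}$, and it encodes both a range condition and the inequality $\mathbf{D}\mathbf{Y}^{+}\mathbf{D}^\top\preceq\mathbf{X}-\mathbf{E}$. This part is routine. The paper likely invokes the corresponding lemma of \citet{PedrosoBatistaEtAl2026OCI}.

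\textbf{Step 2: the fusion constraint.} Write $\boldsymbol{\Sigma}(\mathbf{P})=\mathbf{R}+\mathbf{C}\mathbf{P}\mathbf{C}^\top$, where $\mathbf{R}=\mathbf{R}_{\Ical_i}(k)\succ\zeros$ by Lemma~\ref{lem:R_pd}.
\begin{itemize}
\item For fixed $\boldsymbol{\Sigma}\succ\zeros$, the best unbiased gain is the BLUE $(\mathbf{H}^\top\boldsymbol{\Sigma}^{-1}\mathbf{H})^{-1}\mathbf{H}^\top\boldsymbol{\Sigma}^{-1}$, giving the minimal bound $(\mathbf{H}^\top\boldsymbol{\Sigma}^{-1}\mathbf{H})^{-1}$.
\item Since the worst case is attained at a single $\mathbf{P}^\star=\mathbf{Y}^{-1}$ that dominates the whole family, the min--max reduces to a BLUE with $\boldsymbol{\Sigma}^\star=\boldsymbol{\Sigma}(\mathbf{P}^\star)$. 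This is a saddle-point argument: $\mathbf{K}\boldsymbol{\Sigma}(\mathbf{P})\mathbf{K}^\top\preceq\mathbf{K}\boldsymbol{\Sigma}^\star\mathbf{K}^\top$ for every $\mathbf{K}$.
\item By the Woodbury identity, $(\boldsymbol{\Sigma}^\star)^{-1}=\mathbf{R}^{-1}-\mathbf{R}^{-1}\mathbf{C}(\mathbf{Y}+\mathbf{C}^\top\mathbf{R}^{-1}\mathbf{C})^{-1}\mathbf{C}^\top\mathbf{R}^{-1}$. This identity remains valid with the pseudoinverse when $\mathbf{Y}$ is singular, viewed as a limit.
\item Substituting into the BLUE gives exactly \eqref{eq:KF_K}. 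Invertibility of the outer matrix follows because $\mathbf{H}$ has full column rank, since its top block is $\mathbf{I}$, and $(\boldsymbol{\Sigma}^\star)^{-1}\succ\zeros$. The latter holds because Lemma~\ref{lem:W_full_col_rank} gives boundedness of the family, so $\mathbf{Y}\succ\zeros$ in the relevant directions.
\end{itemize}

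The optimal $\mathbf{X}_{ii}$ must satisfy $\mathbf{X}_{ii}\succeq(\mathbf{H}^\top(\boldsymbol{\Sigma}^\star)^{-1}\mathbf{H})^{-1}$. We then introduce the slack
\[
\mathbf{U}\succeq\mathbf{H}^\top\mathbf{R}^{-1}\mathbf{C}(\mathbf{Y}+\mathbf{C}^\top\mathbf{R}^{-1}\mathbf{C})^{+}\mathbf{C}^\top\mathbf{R}^{-1}\mathbf{H},
\]
which is the second LMI by a generalized Schur complement. The requirement $\mathbf{X}_{ii}\succeq(\mathbf{H}^\top\mathbf{R}^{-1}\mathbf{H}-\mathbf{U})^{-1}$ is then the first LMI. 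Because $J_i$ is monotone (Assumption~\ref{ass:J}) and matrix inversion is antitone, relaxing through $\mathbf{U}$ does not change the optimum. An optimal $\mathbf{U}$ can be taken at equality, since any larger $\mathbf{U}$ only enlarges the feasible lower bound on $\mathbf{X}_{ii}$.

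\textbf{Step 3: joint optimization over $\bomega$.} $\mathbf{Y}$ depends affinely on $\bomega$, and the constraints of \eqref{eq:OCI_KF} and \eqref{eq:KF_SDP} coincide for each fixed $\bomega$, with $\mathbf{K}$ eliminated via \eqref{eq:KF_K}. Hence minimizing over $(\bomega,\mathbf{U},\mathbf{Y},\mathbf{X})$ in \eqref{eq:KF_SDP} yields a Kahan-family-optimal triple.

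\textbf{Main obstacle.} The hardest step is Step~2: justifying rigorously that the minimizing gain over the worst case equals the BLUE at $\mathbf{Y}^{-1}$ when $\mathbf{Y}$ may be singular, which is the reason for the pseudoinverse, and showing that the $\mathbf{U}$-relaxation is tight. I would handle it by perturbing $\mathbf{Y}$ to $\mathbf{Y}+\epsilon\mathbf{I}$, proving the claims there, and passing to the limit using the continuity of $(\mathbf{Y}+\mathbf{C}^\top\mathbf{R}^{-1}\mathbf{C})^{+}$ on the relevant range. Otherwise I would cite the corresponding Kahan-family theorem of \citet{PedrosoBatistaEtAl2026OCI}, with $\mathbf{R}\succ\zeros$ from Lemma~\ref{lem:R_pd} supplying its hypothesis.
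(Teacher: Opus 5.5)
Your main route differs from the paper's, and it is sound in outline. The paper's proof is a one-line citation, which is exactly your fallback: with $\mathbf{R}_{\Ical_i}(k)\succ\zeros$ from Lemma~\ref{lem:R_pd} and Assumption~\ref{ass:J}, it invokes Theorem~1 and Corollary~2 of \citet{PedrosoBatistaEtAl2026OCI}. You instead re-derive what those results contain. You reduce the semi-infinite constraints to $\mathbf{P}=\mathbf{Y}^{-1}$, use Gauss--Markov optimality of the BLUE and Woodbury, turn the three constraints into LMIs with generalized Schur complements, and project over $\bomega$. This makes the mechanism explicit. It even shows that, for each fixed $\bomega$, the feasible sets of $(\mathbf{X}_{ii}(k|k),\mathbf{X}_{\Ical_i}(k|k-1))$ in \eqref{eq:OCI_KF} and \eqref{eq:KF_SDP} coincide, so monotonicity of $J_i$ is not really needed for that step. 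The price is that you must handle boundary points of the simplex yourself, which the citation absorbs.

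Two steps need repair.

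\textbf{Invertibility of $\mathbf{H}^\top\boldsymbol{\Lambda}\mathbf{H}$.} Here $\boldsymbol{\Lambda}:=\mathbf{R}^{-1}-\mathbf{R}^{-1}\mathbf{C}(\mathbf{Y}+\mathbf{C}^\top\mathbf{R}^{-1}\mathbf{C})^{+}\mathbf{C}^\top\mathbf{R}^{-1}$. This does not follow from Lemma~\ref{lem:W_full_col_rank}, which bounds $\Pcal_{\Ical^2_i}(k-1)$, not the Kahan set. Suppose $\mathbf{Y}$ is singular, which is possible when some $\bomega_p=0$. Then the Kahan set contains $(\mathbf{Y}+\epsilon\mathbf{I})^{-1}$ for every $\epsilon>0$ and is unbounded. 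Moreover $\ker\boldsymbol{\Lambda}=\mathbf{C}\ker\mathbf{Y}$, so $\mathbf{H}^\top\boldsymbol{\Lambda}\mathbf{H}$ can be singular; such $\bomega$ are then infeasible in both problems. When $\mathbf{Y}\succ\zeros$ no lemma is needed, since $\mathbf{R}\succ\zeros$ already gives invertibility. The right argument uses feasibility at the argmin:
\begin{itemize}
\item the identity off-diagonal block of the first LMI forces $\mathbf{H}^\top\mathbf{R}^{-1}\mathbf{H}-\mathbf{U}\succ\zeros$;
\item the Schur complement of the second LMI then gives $\mathbf{H}^\top\boldsymbol{\Lambda}\mathbf{H}\succeq\mathbf{H}^\top\mathbf{R}^{-1}\mathbf{H}-\mathbf{U}\succ\zeros$.
\end{itemize}

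\textbf{The worst case for singular $\mathbf{Y}$.} There is no dominating $\mathbf{P}^\star$ in this case. The worst case is finite if and only if $\mathbf{K}\mathbf{C}\mathbf{u}=\zeros$ for all $\mathbf{u}\in\ker\mathbf{Y}$, and it then equals $\mathbf{K}(\mathbf{R}+\mathbf{C}\mathbf{Y}^{+}\mathbf{C}^\top)\mathbf{K}^\top$. So the fusion step is really a BLUE with nuisance directions $\mathbf{C}\ker\mathbf{Y}$. Your $\epsilon$-limit closes this once you verify two facts, with $\boldsymbol{\Sigma}_\epsilon:=\mathbf{R}+\mathbf{C}(\mathbf{Y}+\epsilon\mathbf{I})^{-1}\mathbf{C}^\top$:
\begin{itemize}
\item[(a)] The gain \eqref{eq:KF_K} is the limit of the BLUE gains for $\boldsymbol{\Sigma}_\epsilon$. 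It satisfies the range constraint because $\boldsymbol{\Lambda}\mathbf{C}\mathbf{u}=\zeros$ for $\mathbf{u}\in\ker\mathbf{Y}$. Its worst case is at most $\lim_{\epsilon\to0}(\mathbf{H}^\top\boldsymbol{\Sigma}_\epsilon^{-1}\mathbf{H})^{-1}=(\mathbf{H}^\top\boldsymbol{\Lambda}\mathbf{H})^{-1}$.
\item[(b)] Every admissible $\mathbf{K}$ satisfies $\mathbf{K}(\mathbf{R}+\mathbf{C}\mathbf{Y}^{+}\mathbf{C}^\top)\mathbf{K}^\top=\lim_{\epsilon\to0}\mathbf{K}\boldsymbol{\Sigma}_\epsilon\mathbf{K}^\top\succeq(\mathbf{H}^\top\boldsymbol{\Lambda}\mathbf{H})^{-1}$.
\end{itemize}
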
\vspace{-0.8cm}
\begin{pf}
	Under Assumption~\ref{ass:J} and since $\mathbf{R}_{\Ical_i}(k) \succ \zeros$, by Lemma~\ref{lem:R_pd} under Assumption~\ref{ass:Q_pd}, we are in the conditions of \citet[Theorem~1 and Corollary~2]{PedrosoBatistaEtAl2026OCI}, which immediately proves the result. \hfill$\square$
\end{pf}

Notice that the feasible set of \eqref{eq:KF_SDP} is convex, since it is characterized by three linear matrix inequalities and two equality constraints. Thus, if $J_i$ and $G_i$ are convex (which is the case for meaningful choices such as the trace or determinant\footnote{If one desires to use the determinant as the metric $J_i$, then the optimization problem \eqref{eq:KF_SDP} needs to be slightly modified to be cast as a SDP (see \citet[Remark~III.3]{PedrosoBatistaEtAl2026OCI} and \citet[Section~6.2.3]{MOSEK2024} for details).}), then \eqref{eq:KF_SDP} is a convex optimization problem, which has desirable properties such as robustness to changes in input parameters and the existence of efficient numerical algorithms with global optimality guarantees \citep{BoydVandenberghe2004}. Furthermore, for meaningful choices of $J_i$ and $G_i$ (such as the trace or determinant\footnotemark[\value{footnote}]), it is possible to write \eqref{eq:KF_SDP} as an SDP, for which well-performing off-the-shelf solvers with polynomial worst-case complexity exist \citep{VandenbergheBoyd1996, MOSEK2024}.

\vspace{0.3cm}
\begin{mdframed}[style=callout]
	From Theorem~\ref{th:KF_sol}, one concludes that it is possible to compute a Kahan-family-optimal solution to the design problem of computational unit $\Tcal_i$ at time $k$ from the numerical solution of \eqref{eq:KF_SDP}. Moreover, for meaningful choices of $J_i$ and $G_i$ such as the trace (which corresponds to minimum variance estimation) or determinant, \eqref{eq:KF_SDP} can be cast as an SDP, which can be efficiently solved with well-performing off-the-shelf solvers.
\end{mdframed}

\subsection{Recursive Feasibility and Consistency}\label{sec:feasible_recursive} \vspace{-0.2cm}

Crucially, the distributed filtering solution that results from designing the Kahan-family-optimal gains recursively according to Theorem~\ref{th:KF_sol} in every agent has very strong properties, namely \emph{recursive feasibility} and \emph{recursive consistency}, given an initial condition according to Assumption~\ref{ass:initial_cond}.
Recursive feasibility guarantees that, for all $k\geq 2$ and all $i = 1,2,\ldots,N$, the optimization problem \eqref{eq:KF_SDP} is feasible, i.e., there exists a tuple of decision variables that satisfies the constraints. Recursive consistency guarantees that, for all $k\geq 2$ and all $i = 1,2,\ldots,N$, the estimates $\hat{\mathbf{x}}_i(k|k)$ are consistent according to Definition~\ref{def:consistency}, i.e., $\mathbf{X}_{ii}(k|k)\succeq \EV[\tilde{\mathbf{x}}_i(k|k)\tilde{\mathbf{x}}_i^\top(k|k)]$. The following theorem is the main result of the paper.

\begin{thm}\label{th:feasible_consistent}
	Under Assumptions~\ref{ass:J}--~\ref{ass:initial_cond}, the distributed CL filter \eqref{eq:dist_filter} with Kahan-family-optimal gains computed recursively resorting to Theorem~\ref{th:KF_sol} is recursively feasible and recursively consistent. 
\end{thm}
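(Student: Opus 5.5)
The proof will go by induction on $k\geq 2$, carrying the joint hypothesis
\[
\mathcal{H}(k):\quad \mathbf{X}_{\Ical_p}(k-1|k-2)\succeq \mathbf{P}_{\Ical_p}(k-1|k-2)\quad \text{for all } p=1,\ldots,N .
\]
The base case $\mathcal{H}(2)$ is Assumption~\ref{ass:initial_cond}, since $\mathbf{X}_{\Ical_p}(1|0)$ is available. Given $\mathcal{H}(k)$, I will show three things: (i) problem \eqref{eq:KF_SDP} is feasible in every $\Tcal_i$ at time $k$; (ii) the estimate $\hat{\mathbf{x}}_i(k|k)$ is unbiased and $\mathbf{X}_{ii}(k|k)\succeq \mathbf{P}_{ii}(k|k)$; (iii) $\mathbf{X}_{\Ical_i}(k|k-1)\succeq\mathbf{P}_{\Ical_i}(k|k-1)$, which is exactly $\mathcal{H}(k+1)$.

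\textbf{Feasibility.} Fix any $\bomega\in\Delta^{|\Ical_i^o(k)|}$, for instance uniform weights, and set $\mathbf{Y}=\sum_p\bomega_p\mathbf{Y}^i_{\Ical_p}$. By Lemma~\ref{lem:W_full_col_rank}, $\mathbf{W}_i$ has full column rank. Each $\mathbf{X}_{\Ical_p}\succeq\mathbf{P}_{\Ical_p}$, and $\mathbf{P}_{\Ical_p}\succ\zeros$ can be obtained, e.g., via Assumption~\ref{ass:Q_pd}. Hence every $\mathbf{Y}^i_{\Ical_p}\succeq\zeros$, and whenever all $\bomega_p>0$ the weighted sum $\mathbf{Y}=\mathbf{W}_i^\top\diag(\bomega_p\mathbf{X}_{\Ical_p}^{-1})\mathbf{W}_i\succ\zeros$.

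I then construct the remaining variables by Schur complements:
\begin{itemize}
\item $\mathbf{U}:=\mathbf{H}^\top\mathbf{R}^{-1}\mathbf{C}(\mathbf{Y}+\mathbf{C}^\top\mathbf{R}^{-1}\mathbf{C})^{-1}\mathbf{C}^\top\mathbf{R}^{-1}\mathbf{H}$, which satisfies the second LMI;
\item $\mathbf{X}_{\Ical_i}(k|k-1):=\mathbf{E}+\mathbf{D}\mathbf{Y}^{-1}\mathbf{D}^\top$, which satisfies the third LMI;
\item $\mathbf{X}_{ii}(k|k):=(\mathbf{H}^\top\mathbf{R}^{-1}\mathbf{H}-\mathbf{U})^{-1}$, which satisfies the first LMI.
\end{itemize}
For the last choice I need $\mathbf{H}^\top\mathbf{R}^{-1}\mathbf{H}-\mathbf{U}=\mathbf{H}^\top(\mathbf{R}+\mathbf{C}\mathbf{Y}^{-1}\mathbf{C}^\top)^{-1}\mathbf{H}$, which is the Woodbury identity. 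This matrix is positive definite because $\mathbf{H}_{\Ical_i}=[\mathbf{I};\mathbf{C}_{ii}]$ has full column rank and $\mathbf{R}_{\Ical_i}\succ\zeros$ by Lemma~\ref{lem:R_pd}. So the feasible set is nonempty. Existence of a minimizer I will take from the cited OCI result underlying Theorem~\ref{th:KF_sol}, or from coercivity of $J_i,G_i$ under Assumption~\ref{ass:J}.

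\textbf{Consistency and propagation.} Theorem~\ref{th:KF_sol} says that the SDP solution, with gain \eqref{eq:KF_K}, solves \eqref{eq:OCI_KF} for the optimal $\bomega^\star$. So the constraints of \eqref{eq:OCI_KF} hold for every $\mathbf{P}\in\Pcal^{\mathrm{KF}}_{\Ical_i^2}(\bomega^\star,k-1)$. Lemma~\ref{lem:KF_conservative} gives $\Pcal_{\Ical_i^2}(k-1)\subseteq\Pcal^{\mathrm{KF}}_{\Ical_i^2}(\bomega^\star,k-1)$. The true matrix $\mathbf{P}_{\Ical_i^2}(k-1|k-2)$ lies in $\Pcal_{\Ical_i^2}(k-1)$ because, by $\mathcal{H}(k)$, $\mathbf{W}_{ip}\mathbf{P}_{\Ical_i^2}\mathbf{W}_{ip}^\top=\mathbf{P}_{\Ical_p}\preceq\mathbf{X}_{\Ical_p}$ for all $p\in\Ical_i^o(k)$.

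Plugging this true matrix into the constraints, and using \eqref{eq:Pii}, \eqref{eq:Eee} and \eqref{eq:P_I_2}, gives $\mathbf{X}_{ii}(k|k)\succeq\mathbf{P}_{ii}(k|k)$ and $\mathbf{X}_{\Ical_i}(k|k-1)\succeq\mathbf{P}_{\Ical_i}(k|k-1)$. Here $\mathbf{P}_{\Ical_i}(k|k-1)$ is read as the appropriate block given by the $\mathbf{D},\mathbf{E}$ propagation. Unbiasedness follows from $\mathbf{K}_{\Ical_i}\mathbf{H}_{\Ical_i}=\mathbf{I}$ via \eqref{eq:unbiased_dist_filter}, together with unbiasedness of $\mathbf{z}_{i,1},\mathbf{z}_{i,2}$, which is itself inductive in \eqref{eq:z1_unbiased}--\eqref{eq:z2_unbiased}. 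This closes the induction.

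\textbf{Main obstacle.} The delicate step is the positivity needed for feasibility. It requires $\mathbf{P}_{\Ical_i^2}\succ\zeros$ so that it belongs to the admissible set \eqref{eq:admissible_P}, which uses strict positivity. It also requires $\mathbf{X}_{\Ical_p}$ to be invertible so that $\mathbf{Y}^i_{\Ical_p}$ is defined. I plan to obtain both from Assumption~\ref{ass:Q_pd}: the process noise term in \eqref{eq:tilde_x_p} makes every prediction covariance positive definite, and the third LMI forces $\mathbf{X}_{\Ical_i}(k|k-1)\succeq\mathbf{E}_{\Ical_i}\succ\zeros$, which propagates invertibility through the induction. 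For the base case, invertibility of $\mathbf{X}_{\Ical_i}(1|0)$ follows from $\mathbf{X}_{\Ical_i}(1|0)\succeq\mathbf{P}_{\Ical_i}(1|0)\succ\zeros$, again by the process noise.
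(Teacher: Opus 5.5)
Your proof is correct, and its consistency half is the same as the paper's. By Theorem~\ref{th:KF_sol} the SDP output satisfies the semi-infinite constraints of \eqref{eq:OCI_KF}. By Lemma~\ref{lem:KF_conservative} the Kahan set contains $\Pcal_{\Ical_i^2}(k-1)$. By the induction hypothesis the true $\mathbf{P}_{\Ical_i^2}(k-1|k-2)$ lies in $\Pcal_{\Ical_i^2}(k-1)$.

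The difference is in feasibility. The paper uses Lemma~\ref{lem:W_full_col_rank} only to get boundedness of $\Pcal_{\Ical_i^2}(k-1)$, and then imports feasibility of \eqref{eq:OCI_w_reg} and \eqref{eq:KF_SDP} from Corollaries~1 and~2 of \citet{PedrosoBatistaEtAl2026OCI}. You instead use the full column rank of $\mathbf{W}_i$ directly to obtain $\mathbf{Y}\succ\zeros$ for strictly positive weights. You then exhibit an explicit feasible point via Schur complements and the Woodbury identity; it is the estimator that treats $\mathbf{Y}^{-1}$ as the true covariance of $\tilde{\mathbf{x}}_{\Ical_i^2}(k-1|k-2)$.

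Both routes rest on the same fact. What yours buys is self-containment and explicit positivity bookkeeping that the paper leaves implicit:
\begin{itemize}
\item invertibility of $\mathbf{X}_{\Ical_p}(k-1|k-2)$, needed for $\mathbf{Y}^i_{\Ical_p}$ to be defined;
\item strict positivity of the true $\mathbf{P}_{\Ical_i^2}(k-1|k-2)$, needed for membership in \eqref{eq:admissible_P}.
\end{itemize}
Both follow from Assumption~\ref{ass:Q_pd} together with the cross-agent uncorrelatedness of the process noise implicit in \eqref{eq:propagation_P_pred}. The paper's route is shorter and also yields feasibility of the original problem \eqref{eq:OCI_w_reg} directly.

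One slip: your fallback that a minimizer exists ``from coercivity of $J_i,G_i$ under Assumption~\ref{ass:J}'' does not hold. Assumption~\ref{ass:J} is only strict Loewner monotonicity and implies neither coercivity nor continuity. Existence of the argmin must come from the cited OCI result, or from a concrete choice such as the trace, which is also what the paper relies on.
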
 \vspace{-0.8cm}
\begin{pf}
First, from Lemma~\ref{lem:W_full_col_rank}, for all $i =1, 2, \ldots, N$, if at time $k$ there exist bounds $\mathbf{X}_{\Ical_p}(k-1|k-2) \succeq \mathbf{P}_{\Ical_p}(k-1|k-2)$ with $p\in \Ical_i^o(k)$, then $\mathbf{W}_i$ is full column rank and $\Pcal_{\Ical^2_i}(k-1) $ is bounded. Thus, it follows from \cite[Corollary~1]{PedrosoBatistaEtAl2026OCI} that \eqref{eq:OCI_w_reg} is feasible and from \cite[Corollary~2]{PedrosoBatistaEtAl2026OCI} that \eqref{eq:KF_SDP} is feasible. Since at time $k$, computational unit $\Tcal_i$ computes a new $\mathbf{X}_{\Ical_i}(k|k-1)$ resorting to \eqref{eq:KF_SDP}, then it follows from a simple induction argument that the filtering solution is recursively feasible. Second, notice that recursive consistency follows immediately from the construction of \eqref{eq:OCI_w_reg}, which is preserved in  \eqref{eq:OCI_KF} since the Kahan family is conservative by Lemma~\ref{lem:KF_conservative}. Since \eqref{eq:KF_SDP} characterizes a solution of \eqref{eq:OCI_KF}, it follows that the filtering solution is recursively consistent.  \hfill$\square$
\end{pf}

\begin{mdframed}[style=callout,nobreak=true]
	As pointed out in the state-of-art in Section~\ref{sec:introduction} as well as in \cite[Section~5]{PedrosoBatistaEtAl2025ULSS}, there is a gap to develop well-performing CL algorithms that are both consistent and scalable to an ultra large scale. Theorem~\ref{th:feasible_consistent} shows that the proposed distributed CL that relies on a distributed synthesis procedure resorting to local information is provably consistent.
\end{mdframed}

\subsection{Synthesis Algorithm}\label{sec:alg} \vspace{-0.2cm}

Fig.~\ref{fig:alg} depicts the computation and communication steps of the proposed CL algorithm from the point of view of computational unit $\Tcal_i$. Different block colors represent different timing requirements. First, red blocks correspond to taking measurements, updating the state estimate, and updating the control action. These steps must be performed across all computational units precisely at times $t = kT$ with $k\in \N$, where $T$ is the sampling time of the discrete-time system \eqref{eq:sys_model}. Second, green blocks correspond to local computations, which do not have any synchronization requirements. Third, orange blocks correspond to the exchange of information between neighboring computational units. These transmissions need not occur at a precise time, but they must naturally reach neighboring computational units on time. 

\vspace{0.3cm}
\begin{mdframed}[style=callout]
	Notice that, as long as the sampling time $T$ is greater than the sum of computational and communication delays, the proposed CL algorithm is robust to those delays. In conclusion, together with the fact that computation, memory, and communication load of each operation in Fig.~\ref{fig:alg} does not scale with $N$, the proposed CL algorithm satisfies the ultra large-scale feasibility requirements \cite[Table~1]{PedrosoBatistaEtAl2025ULSS}.
\end{mdframed}

\begin{figure*}[t]
	\raggedright
	\includegraphics{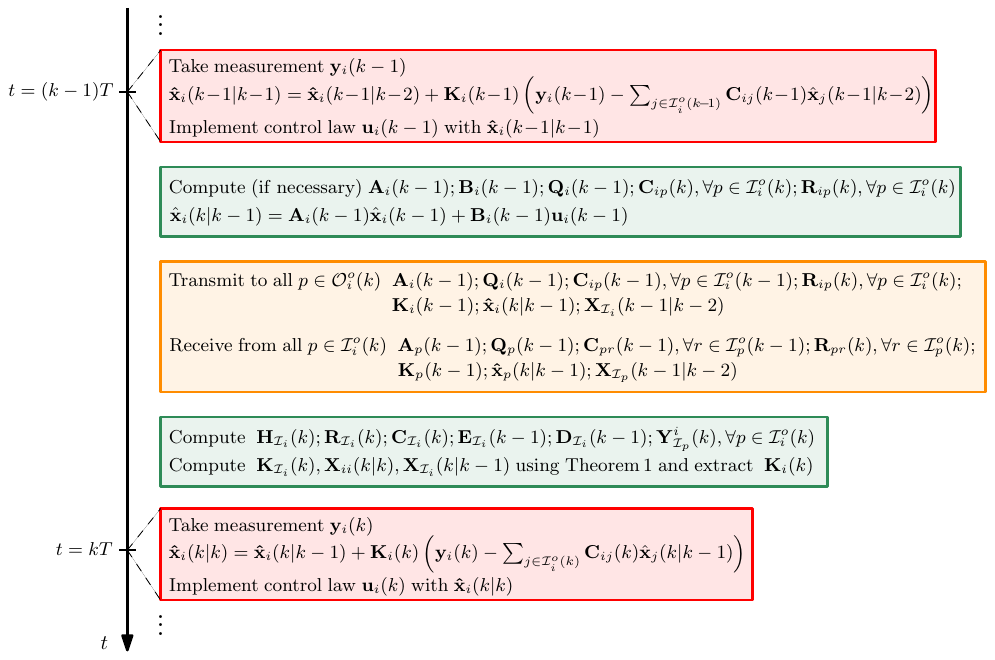}
	\caption{Timeline of proposed CL algorithm for a single computational unit $\Tcal_i$.}
	\label{fig:alg}
\end{figure*}


\section{Numerical  Results}\label{sec:results}

In this section, we simulate the proposed CL algorithm on illustrative synthetic systems.  We consider a grid topology of a network of $N$ systems with discrete-time single integrator dynamics. Each agent has access to relative position measurements w.r.t.\ neighboring systems according to the grid output topology, which is depicted in Fig.~\ref{fig:gridN36} for the case $N = 36$. Some systems have access to measurements of their absolute positions, namely systems $\Scal_1, \Scal_4, \Scal_{28}$, and $\Scal_{35}$ in the network in Fig.~\ref{fig:gridN36}.

\begin{figure}[ht]
	\centering
	\includegraphics[width=0.35\textwidth]{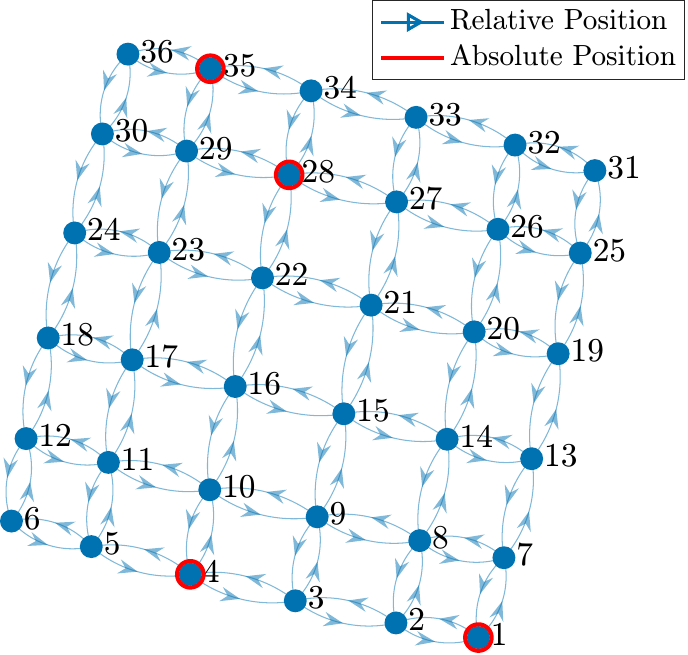}
	\caption{Output topology of the illustrative network for $N = 36$.}
	\label{fig:gridN36}
\end{figure}

\begin{figure}[b]
	\centering
	\begin{subfigure}[t]{0.45\textwidth}
		\includegraphics[width=\textwidth]{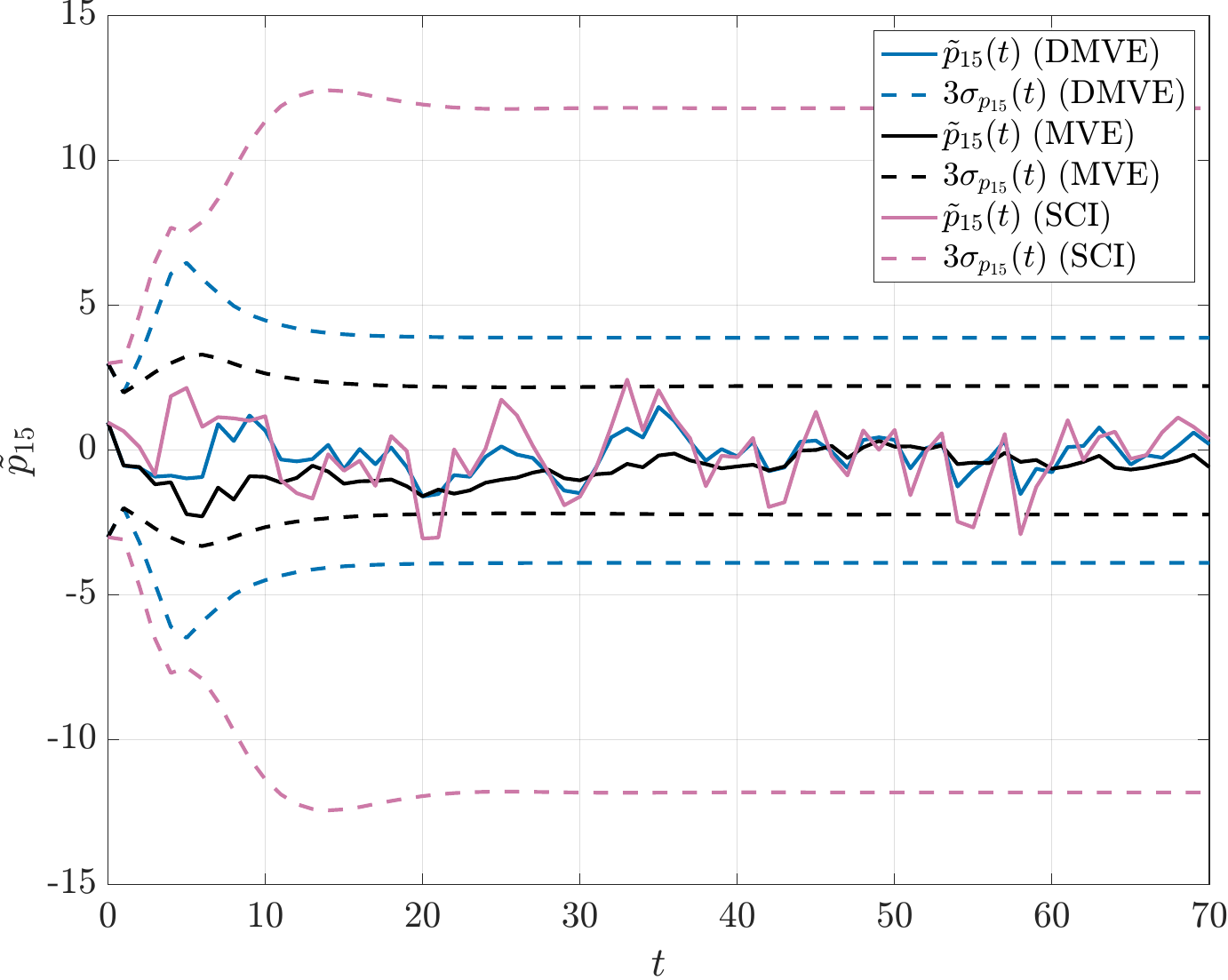}
		\vspace{-0.5cm}
		\caption{Position estimation error of $\Scal_{15}$.}
	\end{subfigure}\hspace{0.5cm}%
	\begin{subfigure}[t]{0.45\textwidth}
		\includegraphics[width=\textwidth]{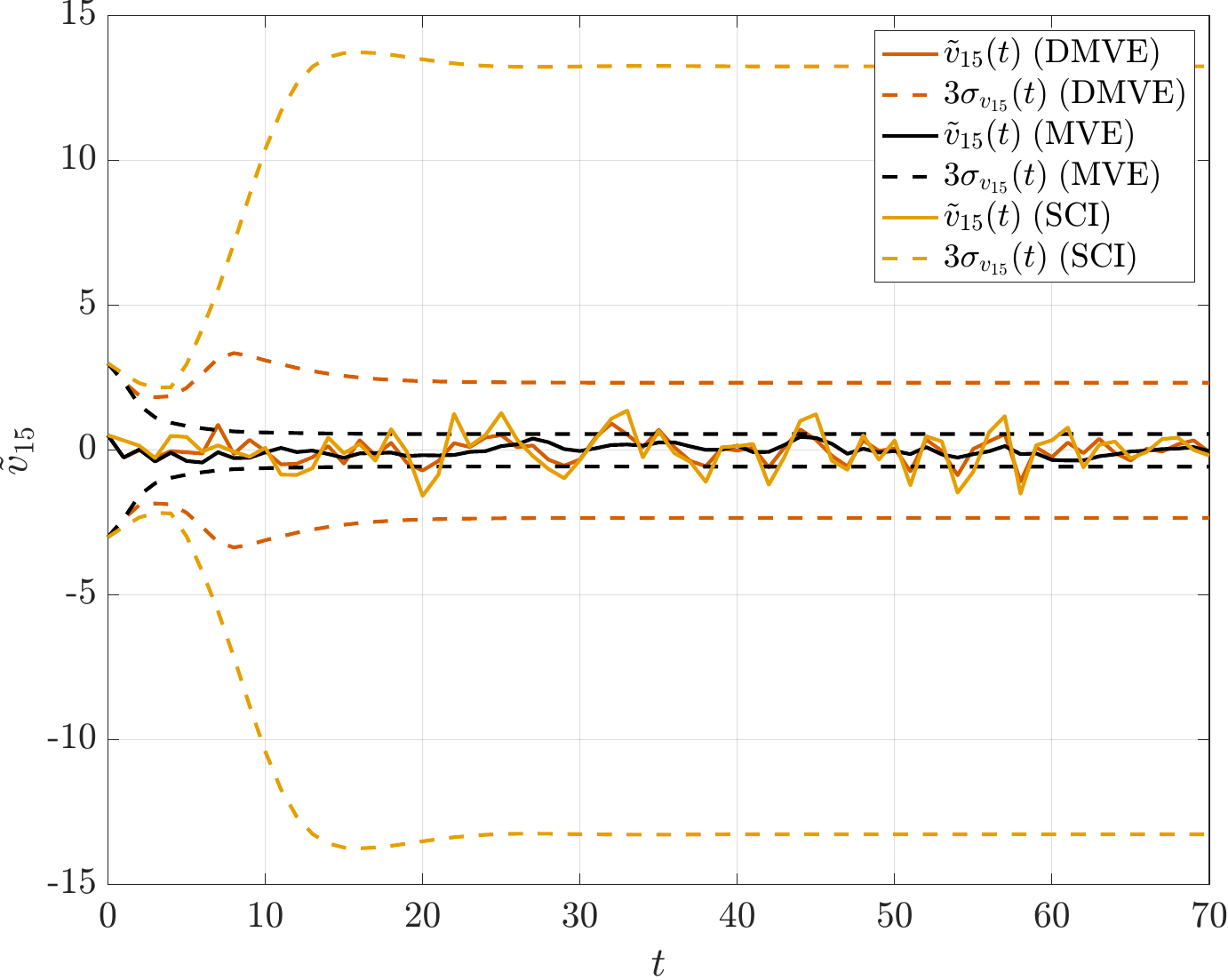}
		\vspace{-0.5cm}
		\caption{Velocity estimation error of $\Scal_{15}$.}
	\end{subfigure}
	\centering
	\begin{subfigure}[t]{0.45\textwidth}
		\includegraphics[width=\textwidth]{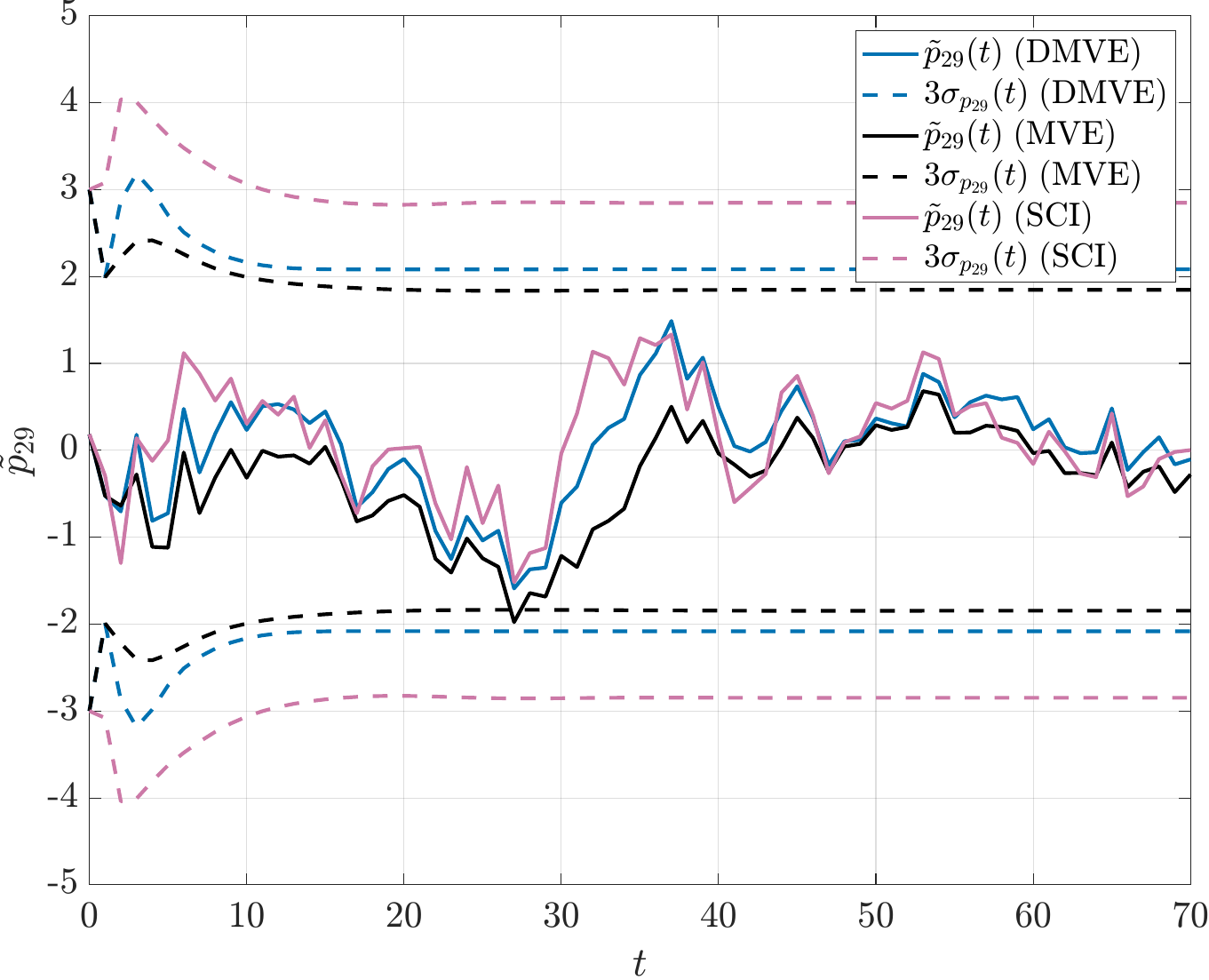}
		\caption{Position estimation error of $\Scal_{29}$.}
		\vspace{-0.5cm}
	\end{subfigure}\hspace{0.5cm}%
	\begin{subfigure}[t]{0.45\textwidth}
		\includegraphics[width=\textwidth]{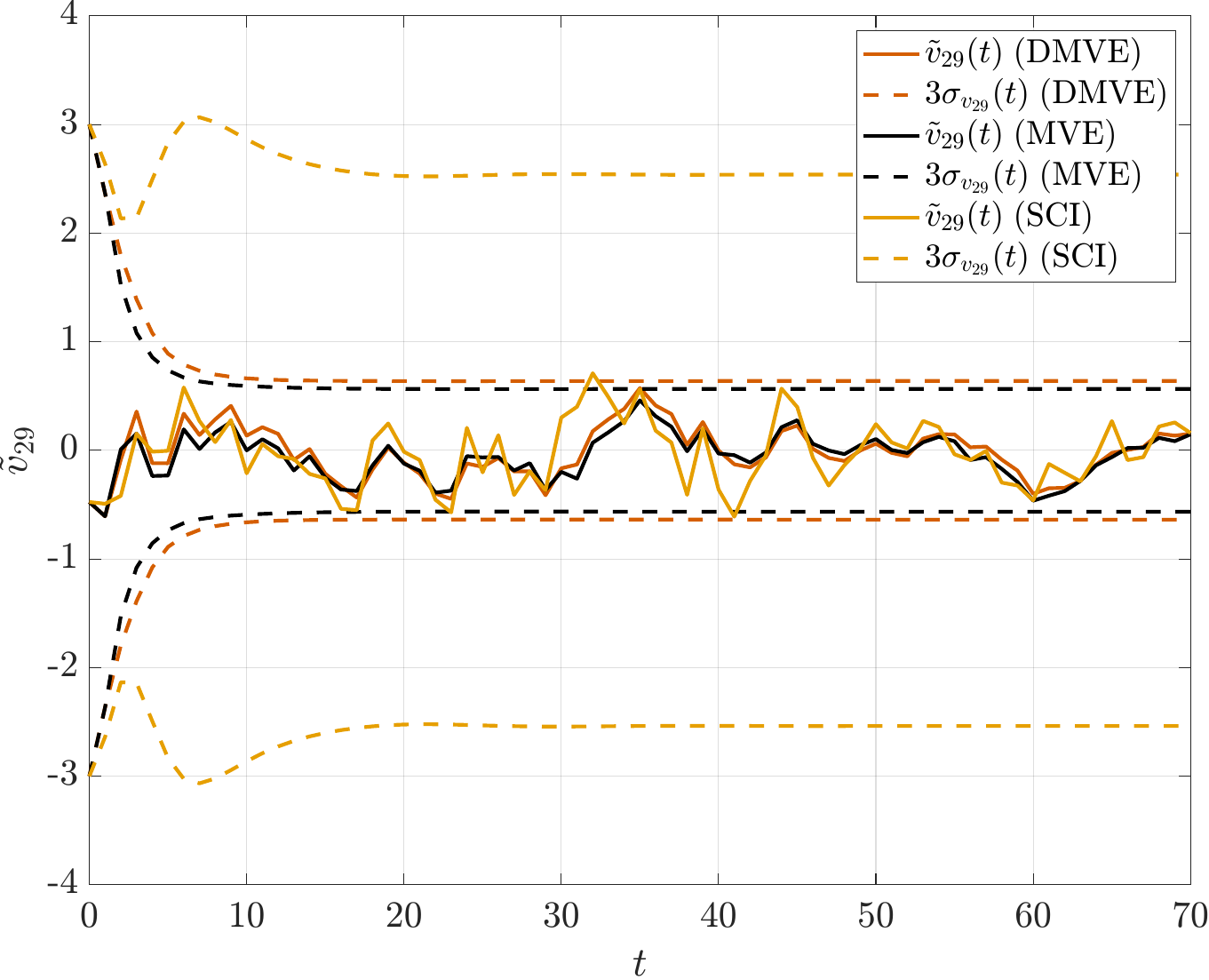}
		\vspace{-0.5cm}
		\caption{Velocity estimation error of $\Scal_{29}$.}
	\end{subfigure}
	\caption{Evolution of estimation error of $\Scal_{15}$ and $\Scal_{29}$ and respective three-standard-deviation bounds.}
	\label{fig:error_gridN36}
\end{figure}

Fig.~\ref{fig:error_gridN36} depicts the evolution of the estimation error of two systems of the network depicted in Fig.~\ref{fig:gridN36} according to the proposed cooperative estimation algorithm with a minimum variance design criterion, which we label distributed minimum variance estimation (DMVE). Fig.~\ref{fig:error_gridN36} also depicts $3\sigma$ bounds according to the consistent estimation error covariance bounds provided by the algorithm. Fig.~\ref{fig:error_gridN36} also depicts the evolution of the estimation error and $3\sigma$ bounds for: (i)~a distributed minimum variance algorithm that is designed centrally, and thus has exact global covariance propagation, as described in Section~\ref{sec:MVE} (which we label MVE); and (ii)~the SCI algorithm proposed in \cite{LiNashashibiEtAl2013}. To evaluate the performance of the three methods we use the accuracy metric $\mathrm{D}$ defined as
\begin{equation*}
	\mathrm{D}:= \frac{1}{N (T_{\text{sim}}+1)}\sum_{k= 0}^{T_{\text{sim}}} \sum_{i = 1}^N ||\tilde{\mathbf{x}}_i(k|k)||_2, 
\end{equation*}
and the covariance tightness metric $\mathrm{ANEES}$ defined as
\begin{equation*} 	\mathrm{ANEES}:= \frac{1}{N  (T_{\text{sim}}+1)}\sum_{k= 0}^{T_{\text{sim}}} \sum_{i = 1}^N \tilde{\mathbf{x}}_i^\top(k|k)\mathbf{X}_{ii}^{-1}(k|k)\tilde{\mathbf{x}}_i(k|k),
\end{equation*}
which are similar to the metrics used, for instance, in \cite{LuftEtAl2018}. The accuracy metric $\mathrm{D}$ quantifies the average norm of the state estimation error across the entire network, thus, a lower $\mathrm{D}$ indicates higher estimation accuracy. The covariance tightness metric $\mathrm{ANEES}$ quantifies the average normalized estimation error squared. Thus, for a provably consistent estimation method, a higher $\mathrm{ANEES}$ indicates lower conservatism of the consistent covariance bounds $\mathbf{X}_{ii}$ for $i = 1,2,\ldots,N$. The performance metrics averaged over $100$ simulations with $T_{\text{sim}} = 70$ of the illustrative network in Fig.~\ref{fig:gridN36} are shown in Table~\ref{tab:metrics_N36}. The estimation accuracy of the proposed CL method (DMVE) is not significantly worse (only about $8\%$ worse in the $\mathrm{D}$ metric) than the minimum variance estimator with exact global covariance propagation (MVE). However, we notice a significant conservatism introduced by DMVE with respect to MVE of about $70\%$ of the $\mathrm{ANEES}$ metric.  In Table~\ref{tab:metrics_N576} we also show the  performance metrics averaged over $100$ simulations with $T_{\text{sim}} = 70$ of a bigger network with $N = 576$.

\begin{table}[t]
	\vspace{0.4cm}
	\centering
	\renewcommand{\arraystretch}{1.2}
	\caption{Accuracy ($\mathrm{D}$) and covariance tightness ($\mathrm{ANEES}$) metrics averaged over $100$ simulations with $T_{\text{sim}} = 70$ of the illustrative network in Fig.~\ref{fig:gridN36} with $N = 36$.\vspace{0.4cm}}
	\label{tab:metrics_N36}
	\begin{tabular}{lcccc}
		\hline
		Method & $\mathrm{D}$ & &  $\mathrm{ANEES}$ & \\
		\hline
		$\mathrm{\mathbf{DMVE}}$ & $\mathbf{0.7579}$ & $-$       & $\mathbf{1.245}$ & $-$ \\
		$\mathrm{MVE}$  & $0.6995$ & \textcolor{darkgreen}{$-7.706\%$} & $2.076$ &  \textcolor{darkgreen}{$+66.75\%$} \\
		$\mathrm{SCI}$  & $0.9724$ & \textcolor{darkred}{$+28.30\%$} & $0.6734$ & \textcolor{darkred}{$-45.93\%$} \\
		\hline
	\end{tabular}
	\vspace{0.4cm}
\end{table}

\begin{table}[t]
	\vspace{0.4cm}
	\centering 
	\renewcommand{\arraystretch}{1.2}
	\caption{Accuracy ($\mathrm{D}$) and covariance tightness ($\mathrm{ANEES}$) metrics averaged over $100$ simulations with $T_{\text{sim}} = 70$ with $N = 576$.\vspace{0.4cm}}
	\label{tab:metrics_N576}
	\begin{tabular}{lcccc}
		\hline
		Method & $\mathrm{D}$ & &  $\mathrm{ANEES}$ & \\
		\hline
		$\mathrm{\mathbf{DMVE}}$ & $0.7175$ & --       & $1.126$ & --\\
		$\mathrm{MVE}$  & $0.5844$ &  \textcolor{darkgreen}{$-18.54\%$} & $1.973$ &  \textcolor{darkgreen}{$+75.24\%$}\\
		$\mathrm{SCI}$  & $0.9594$ &  \textcolor{darkred}{$+33.71\%$} & $0.6496$ &  \textcolor{darkred}{$-42.29\%$}\\
		\hline
	\end{tabular}
	\vspace{0.4cm}
\end{table}

\vspace{0.3cm}
\begin{mdframed}[style=callout]
	The comparison between the two distributed CL methods DMVE and SCI, which are the only ones scalable to an ultra large scale, is the most insightful. Indeed, from Table~\ref{tab:metrics_N36}, the proposed CL algorithm DMVE achieves significantly better estimation accuracy (roughly a $30\%$ improvement in the $\mathrm{D}$ metric) and it is significantly less conservative (roughly a $40\%$ improvement in the $\mathrm{ANEES}$ metric). The accuracy and conservatism improvements are also clearly visible in Fig.~\ref{fig:error_gridN36}.
\end{mdframed}

Another interesting analysis is the synthesis time of a single filter iteration of each of the CL methods. The three methods were simulated for many network sizes $N$ and the average synthesis time of each iteration was measured. It is important to remark that the proposed CL method (DMVE) and SCI were synthesized in parallel for each agent, whereas that is not possible for the centralized design of MVE. The synthesis times are depicted in Fig.~\ref{fig:runtime} in a logarithmic scale. It is worth remarking that the absolute synthesis times depicted in Fig.~\ref{fig:runtime} are not very insightful, since the implementation is rather naive and not optimized for performance (e.g. the SDPs for DMVE are parsed at every iteration). Insights can be instead gathered from their qualitative growth.

\begin{figure}[ht]
	\centering
	\includegraphics[width=0.5\textwidth]{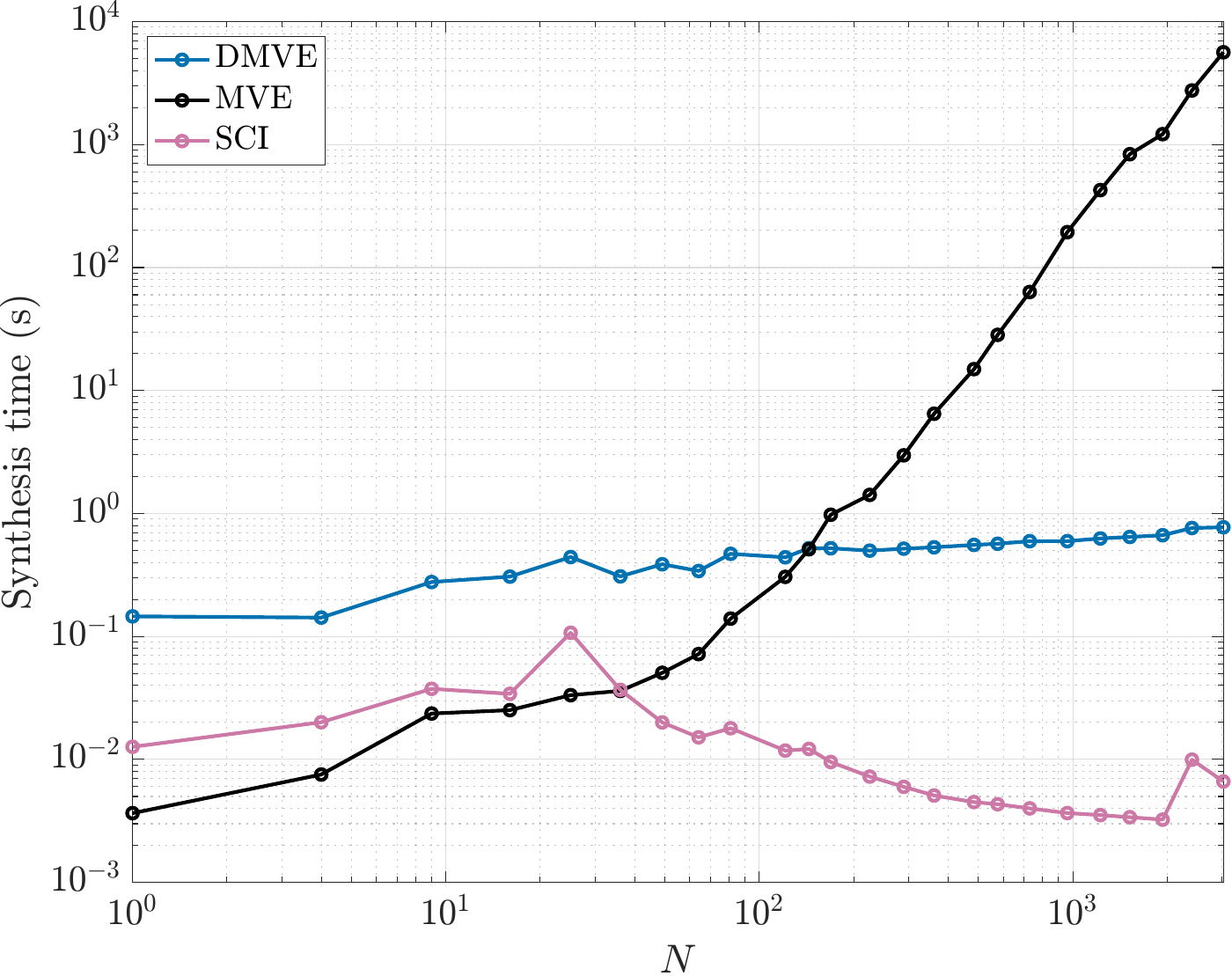}
	\caption{Evolution with $N$ of the synthesis time of a single iteration of DMVE, MVE, and SCI.}
	\label{fig:runtime}
\end{figure}

\vspace{0.3cm}
\begin{mdframed}[style=callout]
	One can readily notice that while the synthesis times of DMVE and SCI (which are the only ones scalable to an ultra large scale) remain approximately constant as $N$ grows, that is not the case for MVE. Indeed, the asymptotic slope of the synthesis time of MVE in the logarithmic scale is approximately $3$, which matches the expected growth with $N^3$ that is associated with the inversion and multiplication of matrices whose dimension grows with $N$. 
\end{mdframed}

An implementation of the solution of the proposed CL method as well as the code of all numerical simulations in this paper is available in an open-access repository at {\scriptsize \href{https://github.com/decenter2021/coop-loc-oci}{\texttt{github.com/decenter2021/coop-loc-oci}}}.


\section{Conclusion}
This work addresses a long-standing gap in cooperative localization (CL) research: the lack of algorithms that are well-performing, consistent, and scalable to an ultra large scale (ULS). To address this problem, we use the concept of \emph{robustness to information uncertainty}. Specifically, local gains are designed to minimize the worst-case consistent bound on the covariance of the estimation error given the information that is known. First, we conclude that this design problem can be expressed in an \emph{overlapping covariance intersection} framework, which is a CI-like framework with partial structural information that was recently introduced. Second, the proposed algorithm is provably consistent, robust to communication and computation delays, and satisfies the ULS feasibility requirements, since its computational, memory, and communication costs do not scale with the number of agents in the system. Third, numerical simulations demonstrate significant improvements over state-of-the-art ULS-scalable methods. Specifically, the proposed method achieves significantly higher accuracy and lower conservatism compared to split covariance intersection (SCI) while maintaining scalability. These results establish the proposed approach as a strong candidate for practical, consistent, and efficient CL in ultra large-scale systems.


\begin{ack} 
This work was supported by the LARSyS FCT funding (UID/50009/2025:  DOI \href{https://doi.org/10.54499/UID/50009/2025}{10.54499/UID/50009/2025}; LA/P/0083/2020: DOI \href{https://doi.org/10.54499/LA/P/0083/2020}{{10.54499/LA/P/0083/2020}}).%
\end{ack}

\begingroup
\scriptsize
\bibliographystyle{plainnat}
\bibliography{references-gt.bib,bibliography.bib,references-c.bib}
\endgroup

\end{document}